\newcommand{\bluenew}[1]{{#1}}  
\newcommand{\cyannew}[1]{{#1}}  
\newcommand{\rednote}[1]{{}}  
\newcommand{\lemlab}[1]{\label{lemma:#1}}
\newcommand{\thmlab}[1]{\label{thm:#1}}
\newcommand{\figlab}[1]{\label{fig:#1}}
\newcommand{\seclab}[1]{\label{sec:#1}}
\newcommand{\lemref}[1]{\ref{lemma:#1}}
\newcommand{\thmref}[1]{\ref{thm:#1}}
\newcommand{\secref}[1]{\ref{sec:#1}}
\newcommand{\figref}[1]{\ref{fig:#1}}
\def\F{{\mathcal F}}
\def\D{{\Delta}}
\def\e{{\varepsilon}}
\def\a{{\alpha}}
\def\b{{\beta}}
\def\q{{\theta}}
\def\bG{{\partial G}}
\newcommand{\squeezelist}{\setlength{\itemsep}{0pt}}
\title{Angle-monotone Paths\\
in Non-obtuse Triangulations}
\author{
Anna Lubiw%
    \thanks{School of Computer Science, University of Waterloo, Waterloo, Ontario, Canada.
      \protect\url{alubiw@uwaterloo.ca}.}
\and 
Joseph O'Rourke%
    \thanks{Department of Computer Science, Smith College, Northampton, MA, USA.
      \protect\url{jorourke@smith.edu}.}
}
\begin{document}
\maketitle

\begin{abstract}
We reprove a 
result of Dehkordi, Frati, and Gudmundsson:
every two vertices in a non-obtuse triangulation of a point set
are connected by an angle-monotone path---an $xy$-monotone path
in an appropriately rotated coordinate system. We show that this result cannot be
extended to angle-monotone spanning trees, but can be extended to
boundary-rooted spanning forests. The latter leads to
a conjectural edge-unfolding
of sufficiently shallow polyhedral convex caps.
\end{abstract}

\section{Introduction}
\seclab{Introduction}
The central result of this paper is to offer an alternative---and we believe simpler---proof
of a result of
Dehkordi, Frati, and Gudmundsson~\cite{dfg-icgps-15} (henceforth, DFG): 
\begin{tcolorbox}
``\textbf{Lemma~4}. Let $G$ be a Gabriel triangulation on a point set $P$.
For every two points $s,t \in P$, 
there exists an angle $\q$ such that $G$ contains a $\q$-path from $s$ to $t$.''
\end{tcolorbox}
We first explain this result, using notation from~\cite{bbcklv-gtamg-16},
before detailing our other contributions.
First, we use $S$ for the point set and $\b$ instead of $\q$.
A Gabriel triangulation as defined by DFG is a triangulation of $S$ where
``every angle of a triangle delimiting an internal face is acute.''
Because neither they nor we need any of the various properties of Gabriel triangulations
except the angle property, and we only need non-obtuse rather than strict acuteness,
we define $G$ to be a plane geometric graph that is a non-obtuse triangulation of $S$.

Define the \emph{wedge} $W(\b,v)$ to be the region of
the plane 
bounded by rays at angles $\b \pm 45^\circ$ 
emanating from $v$.
$W$ is closed along (i.e., includes) both rays, and has 
angular  \emph{width} of $90^\circ$.
(Later we generalize to widths different from $90^\circ$.)
A polygonal path $(v_0,\ldots,v_k)$ \bluenew{consisting of} edges of $G$ is called \emph{$\b$-monotone}
(for short, a \emph{$\b$-path})
if the vector of every edge $(v_i, v_{i+1})$ lies in $W(\b,v_0)$.
These are the $\q$-paths of DFG.
Note that if $\b=45^\circ$, then a $\b$-monotone path is
both $x$- and $y$-monotone with respect to a Cartesian coordinate system.
A path that is $\b$-monotone for some $\b$ is called \emph{angle-monotone}.

Our phrasing of the DFG result is: 

\begin{theorem}
In a non-obtuse triangulation $G$, 
every pair of vertices is connected by an angle-monotone path.
\thmlab{PairPath}
\end{theorem}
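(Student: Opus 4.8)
The plan is to prove this by a greedy/inductive walk from $s$ toward $t$, choosing at each step the "right" triangle to traverse so that the accumulated path stays inside a fixed wedge. Let me think about what wedge to fix. The natural choice: let $\beta$ be the angle of the vector $\vec{st}$, so $t$ lies on the bisector ray of $W(\beta,s)$. The goal is to build a $\beta$-path from $s$ to $t$ — i.e., every edge vector must lie within $45^\circ$ of direction $\beta$.

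First I would set up the induction on the number of vertices of $G$ inside the wedge $W(\beta, s)$ that are "closer to $s$" in the $\beta$-direction than $t$ is, or perhaps better, induct on the distance (measured along direction $\beta$) from the current vertex to $t$. The key lemma I would want is local: if $v \ne t$ is a vertex with $t \in W(\beta,v)$ (i.e., $t$ is still ahead of $v$ in the allowed cone), then there is an edge $(v,v')$ of $G$ with its vector in $W(\beta,v)$ and with $t \in W(\beta,v')$ and $v'$ strictly closer to $t$ in the $\beta$-projection. Given such a lemma, starting at $s$ (where $t \in W(\beta,s)$ by construction) I repeatedly apply it; each step strictly decreases the $\beta$-coordinate gap to $t$, there are finitely many vertices, so the walk terminates — and it can only terminate at $t$. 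Concatenating these edges gives a path all of whose edge vectors lie in $W(\beta,v_0) = W(\beta,s)$, which is exactly a $\beta$-path.

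The heart of the argument — and the main obstacle — is proving the local lemma using non-obtuseness. Here is the idea: consider the vertex $v$ and the ray $\rho$ from $v$ toward $t$; this ray enters some triangle $\triangle v a b$ incident to $v$ (or passes along an edge $va$). Consider the triangle through which $\rho$ leaves $v$'s star. I claim at least one of the two "forward" vertices of that triangle gives a good next step. The non-obtuse condition is what forces this: in a non-obtuse triangle $\triangle vab$, the foot of the altitude from $v$ lies on segment $ab$, equivalently the angles at $a$ and at $b$ are each $\le 90^\circ$, which constrains how far the edges $va$ and $vb$ can deviate from the direction of the ray into the triangle. Quantitatively, if $\rho$ points in direction $\beta$ and lies between edges $va$ and $vb$, then I expect one can show $va$ or $vb$ deviates from $\beta$ by at most $45^\circ$, and that the chosen endpoint still has $t$ in its wedge. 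The case analysis — whether $\rho$ hits a triangle interior or an edge, and whether the good endpoint is $a$ or $b$, and verifying the wedge-containment $t \in W(\beta, v')$ carries over — is where the real work lies; I would organize it around the angle that $\rho$ makes with the incident edges and invoke that the triangle angle at the far endpoints is $\le 90^\circ$ to bound the edge direction, then use a simple geometric monotonicity fact (the wedge $W(\beta,\cdot)$ "shrinks forward", so moving to a point still ahead of $v$ inside $W(\beta,v)$ keeps $t$ in the new wedge provided $t$ was already in $W(\beta,v)$ and ahead of $v'$).

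I would also need to handle the edge case where the ray $\rho$ to $t$ runs exactly along an edge of $G$, or where $v$ is on the boundary of $G$ so its star is not a full $360^\circ$ fan; in the boundary case the ray toward $t$ still enters some incident triangle since $t$ lies in the interior side, so the same local argument applies. Termination and the fact that we never revisit a vertex both follow from strict monotonicity of the $\beta$-projection, so no separate acyclicity argument is needed. In summary: fix $\beta = \angle \vec{st}$, prove the one-step progress lemma using the non-obtuse angle bound to keep edge directions within $45^\circ$ of $\beta$ and $t$ within the advancing wedge, then iterate.
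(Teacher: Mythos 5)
There is a genuine gap, and it sits exactly where you flagged ``the real work'': the one-step progress lemma is false as stated, and so is the ``simple geometric monotonicity fact'' you want to invoke. If $v' \in W(\beta,v)$ then $W(\beta,v') \subseteq W(\beta,v)$ --- the forward wedge \emph{shrinks} as you advance --- so $t \in W(\beta,v)$ does \emph{not} imply $t \in W(\beta,v')$, even when $t$ is ahead of $v'$ in the $\beta$-projection. Concretely, with $\beta=0$, $v=(0,0)$, $t=(10,10)$ and $v'=(1,-1)$: both $t$ and $v'$ lie in $W(0,v)$ and $t$ is ahead of $v'$, yet the direction from $v'$ to $t$ is about $51^\circ$, outside $W(0,v')$. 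What non-obtuseness actually buys you is only that \emph{some} edge incident to $v$ lies in $W(\beta,v)$ (this is property~(1) in the paper; your altitude-foot argument essentially proves it). It does not give you an edge whose far endpoint still sees $t$ inside its wedge: there may be exactly one edge in $W(\beta,v)$, pointing at, say, $\beta+44^\circ$ and long, after which $t$ has fallen out of the wedge and your walk is stuck with no legal move.

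The deeper problem is the very first choice: fixing $\beta$ to be the direction of $\vec{st}$. The theorem only asserts a $\beta$-path for \emph{some} $\beta$ depending on the pair, and that is not a presentational convenience. For a fixed $\beta$, the set of vertices reachable from $s$ by $\beta$-paths is a region $R(\beta)$ pinched between a lower and an upper envelope, and nothing forces that region to contain the ray from $s$ in direction $\beta$: if the unique edge at $s$ inside $W(\beta,s)$ points at $\beta-40^\circ$, then every $\beta$-path starts with it and the entire reachable set can drift well clockwise of $t$. This is precisely why the paper (following DFG) does not fix one $\beta$ but sweeps $\beta$ through its critical values and proves a covering statement: every vertex in $R(\beta)$ is reachable (Lemma~\lemref{RincludesVerts}), and consecutive regions $R(\beta_i)$, $R(\beta_{i+1})$ leave no vertex orphaned (Lemmas~\lemref{UbetaShared} and~\lemref{NoOrphans}), so every $t$ lands in some $R(\beta_i)$. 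To rescue your plan you would have to either let $\beta$ vary --- which is the sweep --- or replace the invariant ``$t \in W(\beta,v_i)$ at every step'' with something that survives the shrinking wedge; the latter is essentially what the paper's envelope-and-region machinery formalizes.
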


\noindent{\bf Other Contributions.}  We extend Theorem~\thmref{PairPath} to wedges of any width $\gamma$---if a plane geometric graph that includes the convex hull of $S$ has all angles at most $\gamma$, 
then there is an angle-monotone path of width $\gamma$ between any two vertices.
\cyannew{Of necessity, $60^\circ \le \gamma < 180^\circ$.}
One significance of angle-monotone paths of width $\gamma < 180^\circ$ is that they have a spanning ratio of $1/\cos{\frac{\gamma}{2}}$~\cite{bbcklv-gtamg-16}.  We do not pursue that aspect here.  
Instead, we investigate angle-monotone spanning trees.
\bluenew{These were studied independently in~\cite{mastakas2016rooted}, which addressed
recognition and construction, but not existence---our focus.}
We show that Theorem~\thmref{PairPath} does not extend to angle-monotone spanning trees, but  does extend to boundary-rooted spanning forests.  
Then, in Section~\secref{Unfolding} we make a novel connection to edge-unfolding polyhedra.

\section{Proof}
\seclab{Proof}
We prove Theorem~\thmref{PairPath} by showing that there is an angle-monotone path from an arbitrary fixed vertex $s$ to every other vertex. 
The proof uses an angular sweep \bluenew{around} $s$,
which by convention we place  
at the origin. 
\cyannew{We first consider 
a fixed \bluenew{but arbitrary} angle $\b$ and
investigate} 
\rednote{AL: just more active verbs.}
which vertices are reached by $\b$-paths from $s$. 
Let $\bG$ be the boundary of $G$, i.e., the convex hull of $S$.  
Our \bluenew{proof relies} on two properties of vertices $v$ not on $\bG$: 
(1) the wedge $W(\b,v)$ includes at least one edge incident to $v$; 
(2) if the wedge has only one edge incident to $v$ then that edge does not lie along a bounding ray of the wedge.
\rednote{Rev1:
``I found a bit confusing that properties (1) and (2) are stated after an angle $\beta$ has been fixed. Now it is true that $\beta$ has no specific value; yet I guess it would be better to state properties (1) and (2) for *any* angle $\beta$, and then to fix $\beta$ to any specific value.''
JOR: Minor confusion. I fixed with the ``\bluenew{but arbitrary}" above.}
In order to avoid dealing with boundary vertices as a separate case, we will augment $G$ so that conditions~(1) and~(2) hold for boundary vertices as well.  
At every vertex $v$ on $\bG$ add 
a finite set of rays that subdivide the exterior angle at $v$ into angles of at most $90^\circ$.
Call the result $G^+$.  By construction, properties (1) and (2) now hold for every vertex $v$ if we consider both edges and rays incident to $v$. 
Note that no added ray crosses an edge of $G$.  In the special case of 
\bluenew{widths $\ge 90^\circ$}, the rays can be chosen so that they do not intersect 
one another.
\rednote{Rev1: ``I guess this is true for any angle $\ge 90^\circ$, and not only for $90^\circ$."
JOR: Good point, we just use $=90^\circ$ even for $\gamma > 90^\circ$. Do not think it worthwhile
to explain, so just a silent change.}
When we generalize to smaller widths, the rays will necessarily intersect each other but this will not  
influence our proof.  In our figures, the rays are drawn short as a reminder that only their angles at the convex hull are relevant.

A $\b$-path starting at $s$ is \emph{maximal} if there is no edge or ray that can be added at the end of the path while keeping it a $\b$-path.  In particular, a path ending with a ray is maximal.  DFG proved that every maximal $\b$-path terminates on $\bG$, which in our terms becomes:

\begin{lemma}
Any maximal $\b$-path ends with a ray.
\lemlab{maximal-beta}
\end{lemma}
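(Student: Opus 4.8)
The plan is to show that a $\b$-path which does \emph{not} end with a ray can always be extended, so that the only way to be maximal is to terminate with a ray. So suppose $(v_0=s,\ldots,v_k)$ is a $\b$-path whose last edge $(v_{k-1},v_k)$ is a genuine edge of $G$ (not an added ray). I want to produce an edge or ray at $v_k$ whose vector lies in the wedge $W(\b,v_0)$; since all wedges are translates of each other, it is equivalent to find an edge/ray at $v_k$ lying in $W(\b,v_k)$.

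First I would invoke properties (1) and (2) for the vertex $v_k$ in the augmented graph $G^+$ (these hold for every vertex, boundary or not, by construction of $G^+$): the wedge $W(\b,v_k)$ contains at least one edge or ray incident to $v_k$, and if it contains exactly one, that one does not lie along a bounding ray of the wedge. The key geometric observation is that the incoming edge $(v_{k-1},v_k)$, viewed as the vector from $v_k$ back to $v_{k-1}$, points in the direction \emph{opposite} to a vector lying in $W(\b,v_k)$; equivalently, this reversed vector lies in the antipodal wedge $-W(\b,v_k)=W(\b+180^\circ,v_k)$, which is disjoint from the interior of $W(\b,v_k)$ and can share with $W(\b,v_k)$ only the two bounding rays' directions — and only if the incoming edge is exactly along a bounding ray. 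So: if the wedge $W(\b,v_k)$ has only one incident edge/ray $e$, then by property (2) $e$ is not along a bounding ray, hence $e$ is not the reversal of the incoming edge, hence $e\ne (v_k,v_{k-1})$ and $e$ extends the path. If the wedge has at least two incident edges/rays, at most one of them can be the reversal of the incoming edge (and that only in the boundary-ray case), so some other one is available to extend the path.

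The main obstacle, and the step I would be most careful about, is the degenerate case where the incoming edge $(v_{k-1},v_k)$ lies exactly along one of the two bounding rays of $W(\b,v_k)$: then its reversal lies along the \emph{other} bounding ray, which is a legitimate direction inside the (closed) wedge, so a priori the only available edge at $v_k$ could be this reversal and we would be stuck. I would rule this out by the non-obtuse/triangulation structure: at an interior vertex $v_k$ the edge along one bounding ray of the $90^\circ$ wedge, together with the angle constraint (every triangle angle $\le 90^\circ$, or more generally $\le\gamma$ with wedge width $\gamma$), forces another incident edge to fall strictly inside the wedge on the far side — so property (1) really delivers a \emph{second} usable direction in this case. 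For boundary vertices, the added rays in $G^+$ subdivide the exterior angle into pieces of at most $90^\circ$, which similarly guarantees a ray strictly interior to the wedge whenever the incoming edge hugs a bounding ray. Putting these together, in every case an extending edge or ray exists unless the path already ended with a ray; since a path cannot be extended past a ray (a ray has no endpoint to attach to), the maximal $\b$-paths are exactly those ending in a ray, which is the claim. I would finish by noting this also recovers the DFG statement that maximal $\b$-paths reach $\bG$, since rays were added only at hull vertices.
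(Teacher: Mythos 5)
Your core step is exactly the paper's: apply property (1) at the terminal vertex $v_k$ to obtain an edge or ray of $G^+$ lying in $W(\b,v_k)$, and use it to extend any path that does not yet end in a ray. The paper's entire proof is that one sentence. Everything you add about the reversal of the incoming edge is unnecessary, and it rests on a false geometric claim: the antipodal wedge $W(\b+180^\circ,v_k)$ does \emph{not} share its bounding-ray directions with $W(\b,v_k)$. For wedge width $90^\circ$ the two closed cones of directions are $[\b-45^\circ,\b+45^\circ]$ and $[\b+135^\circ,\b+225^\circ]$, which are disjoint; they would first touch at width $180^\circ$, which the paper explicitly excludes ($60^\circ\le\gamma<180^\circ$). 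Hence the reversal of $(v_{k-1},v_k)$ can never lie in $W(\b,v_k)$, the ``degenerate case'' you labor over is vacuous, and neither property (2) nor the non-obtuseness of the triangulation is needed for this lemma --- property (2) does real work only later, in Observation~\ref{obs:no-lower} and Lemma~\lemref{UbetaShared}. Your conclusion is correct and your approach is the paper's, but you should see that property (1) alone closes the argument.
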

\begin{proof}
Consider a $\b$-path that ends at a vertex $v$.  The wedge $W(\b, v)$ must include an edge or ray of $G^+$ by property~(1) so the path can be extended.   
\end{proof}

\medskip
\noindent
We will see later (in \bluenew{Fig.~\figref{Beta1Beta2}(b))}
\rednote{JOR: Only illustrated in (b).}
that it is possible for a $\b$-path to 
include edges of $\bG$, 
\rednote{Rev1: and that those boundary edges may even form multiple subpaths. ``Not clear what you mean by this."
JOR: Agree, too terse. Attempted fix:}
\bluenew{return to interior edges, and then later again include edges of $\bG$}.
For a fixed $\b$, let $P(\b)$ be the set of all maximal $\b$-paths starting from $s$.
Let $V(\b)$ and $E(\b)$ be the set of vertices and edges/rays in $P(\b)$.

Let $U(\b) \in P(\b)$ be the 
\emph{upper envelope} of $P(\b)$,
defined as follows.
Starting from vertex $v=s$,  grow \bluenew{$U(\b)$} 
by always following
the most counterclockwise edge/ray from $v$ falling within $W(\b,v)$.
$U(\b)$ is necessarily a maximal $\b$-path, so it ends with a ray.  By property~(2) above, we have:

\begin{obs}
$U(\b)$ does not include any edge/ray along the lower ray of \bluenew{$W(\b,s)$}, at angle $\b - 45^\circ$.
\label{obs:no-lower}
\end{obs}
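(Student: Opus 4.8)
The plan is to argue by contradiction using the way $U(\b)$ is grown: always taking the most counterclockwise admissible edge/ray. Suppose for contradiction that $U(\b)$ does include an edge or ray $e$ lying along the lower bounding ray of $W(\b,s)$, i.e.\ in direction $\b-45^\circ$, and let $v$ be the vertex of $U(\b)$ from which $e$ emanates. First I would record the elementary fact that every edge of a $\b$-path has its vector in $W(\b,s)$ by definition, so every vertex of $U(\b)$, in particular $v$, lies in $W(\b,s)$; and because $e$ points in the extreme lower direction $\b-45^\circ$, $v$ itself must actually lie \emph{on} the lower ray of $W(\b,s)$ (otherwise the far endpoint of $e$ would fall outside the wedge of $s$).

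Next I would use the angular-width bookkeeping at $v$. A key observation is that the wedge $W(\b,v)$ is just the translate of $W(\b,s)$ to $v$; since $v$ sits on the lower ray of $W(\b,s)$, the lower ray of $W(\b,v)$ is a sub-ray of the lower ray of $W(\b,s)$, and $e$ lies exactly along this lower ray of $W(\b,v)$. So at the step where $U(\b)$ chose $e$ at $v$, the edge/ray $e$ was the \emph{most clockwise} possible choice inside $W(\b,v)$. But $U(\b)$ is built by always selecting the \emph{most counterclockwise} edge/ray of $G^+$ inside $W(\b,v)$. Hence $e$ can be chosen only if it is the \emph{unique} edge/ray of $G^+$ incident to $v$ lying inside $W(\b,v)$. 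Property~(2) of $G^+$ (which holds at every vertex by the augmentation) says precisely that if the wedge at $v$ contains only one incident edge/ray, that edge/ray does not lie along a bounding ray of the wedge. This contradicts $e$ lying along the lower ray of $W(\b,v)$.

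The only remaining subtlety is the base case $v=s$ itself: could the very first edge/ray of $U(\b)$ lie along the lower ray at angle $\b-45^\circ$? This is handled by the same appeal to property~(2) at $s$ (note $G^+$ was constructed so that (1) and (2) hold at \emph{every} vertex, including boundary vertices and hence including $s$): if the first chosen edge/ray were along the lower ray of $W(\b,s)$, it would again have to be the unique incident edge/ray in the wedge, contradicting property~(2). So no vertex of $U(\b)$, including $s$, can emit an edge/ray along the lower ray, which is the claim.

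I expect the main (though modest) obstacle to be the geometric step identifying that an edge in the extreme lower direction forces its source vertex onto the lower ray of $W(\b,s)$, together with correctly matching ``most counterclockwise'' at $v$ against ``lies along the lower bounding ray'' — i.e.\ getting the orientation conventions aligned so that property~(2) applies cleanly. Everything after that is a one-line contradiction.
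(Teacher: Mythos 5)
Your core argument is exactly the paper's (the paper compresses it to one line): $U(\b)$ always takes the \emph{most counterclockwise} edge/ray in $W(\b,v)$, so if it ever took one along the lower bounding ray --- the extreme clockwise direction --- that edge/ray would have to be the \emph{only} one in $W(\b,v)$, contradicting property~(2), which holds at every vertex of $G^+$ including $s$. That part, and your handling of the base case at $s$, is correct. However, the intermediate geometric claim you single out as the ``main obstacle'' is false: a vertex $v$ emitting an edge in direction $\b-45^\circ$ need \emph{not} lie on the lower ray of $W(\b,s)$. Since $W(\b,s)$ is a convex cone whose boundary direction $\b-45^\circ$ belongs to the cone, translating any interior point of the wedge in that direction stays inside the wedge, so the far endpoint of $e$ does not leave $W(\b,s)$. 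Fortunately this claim is not needed anywhere: all your argument requires is that an edge of direction $\b-45^\circ$ emanating from $v$ lies along the lower bounding ray of $W(\b,v)$, which is immediate from the definition of $W(\b,v)$ as the translate of the wedge to $v$. Deleting that sentence leaves a correct proof matching the paper's.
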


$L(\b)$ is similarly the \emph{lower envelope}, the most clockwise path.
Note that ``upper'' and ``lower'' are to be interpreted 
as most counterclockwise and most clockwise respectively, not in terms of $y$-coordinates.%
\footnote{These notions are analogous but not equivalent to DFG's ``high'' and ``low" paths.}
Finally, define $R(\b)$ to be the region of the plane whose boundary is $L(\b)$, $U(\b)$.
Fig.~\figref{PbetaExample} illustrates these definitions.
\begin{figure}[htbp]
\centering
\includegraphics[width=0.75\linewidth]{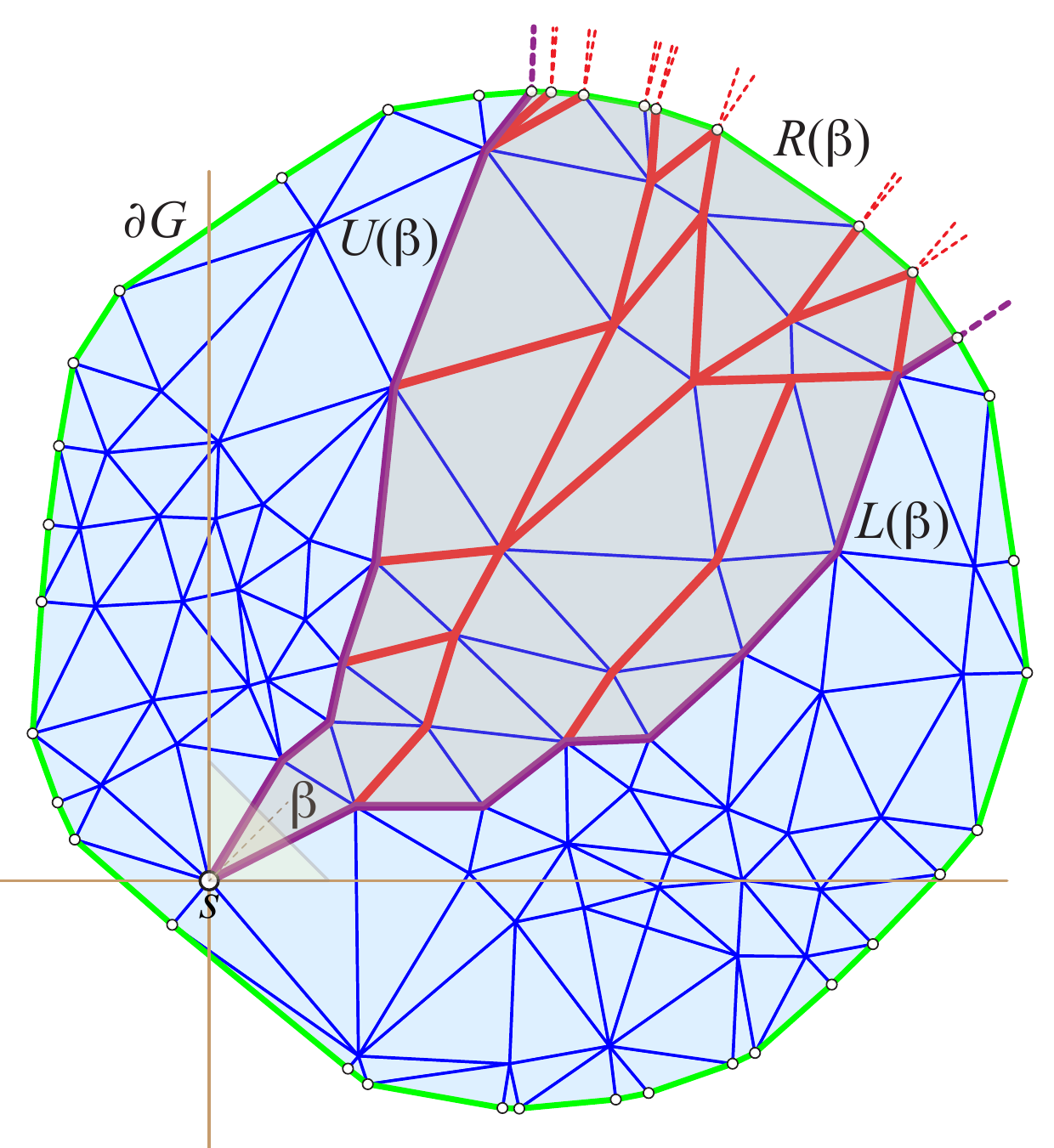}
\caption{$P(\b)$ edges are marked; $\b=45^\circ$.
$L(\b)$ and $U(\b)$ delimit the region $R(\b)$.
Rays shown only for $V(\b)$ hull vertices.
}
\figlab{PbetaExample}
\end{figure}

\begin{lemma}
Every vertex in $R(\b)$ is in $V(\b)$, i.e., every
vertex in $R(\b)$ can be reached from $s$ via a $\b$-path.
\lemlab{RincludesVerts}
\end{lemma}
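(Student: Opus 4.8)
The plan is to argue by a connectivity/region-sweep argument: every vertex $v$ strictly inside the region $R(\b)$ bounded by $L(\b)$ and $U(\b)$ is reachable from $s$ by a $\b$-path. The natural idea is to extend the family of ``envelope'' paths: for each vertex $v \in R(\b)$, I would like to show that the $\b$-path family $P(\b)$ \emph{sweeps across} $R(\b)$ continuously as we move from $L(\b)$ to $U(\b)$, so that $v$ lies on (or is pierced by) one of those paths. Concretely, I would first establish that $R(\b)$ is a simply connected region whose boundary is exactly $L(\b) \cup U(\b)$ (using that these are monotone in direction $\b$, so they only meet at $s$ and at their terminal rays), and that every edge of $G^+$ with both endpoints in $\overline{R(\b)}$ that lies in some wedge $W(\b, \cdot)$ stays in $\overline{R(\b)}$.

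The key steps, in order: (1) Show $L(\b)$ and $U(\b)$ do not cross except at $s$, so $R(\b)$ is a well-defined topological disk; here I'd use $\b$-monotonicity (both are $x$-monotone in the rotated frame, starting from the same point and ending on $\bG$) together with Observation~\ref{obs:no-lower} and its mirror to control the terminal rays. (2) Induct on vertices of $R(\b)$ in order of their coordinate along direction $\b$ (i.e., sweep a line perpendicular to $\b$). The base case $v=s$ is trivial. (3) For the inductive step, take $v \in R(\b)$ with $v \ne s$; by property (1) applied to $v$ (valid since we are in $G^+$), the wedge $W(\b,v)$ contains an edge or ray incident to $v$ — but I actually need an edge/ray entering $v$ from the $\b$-backward side, i.e., an edge $(u,v)$ with $v \in W(\b,u)$ and $u$ having smaller $\b$-coordinate. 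To get this I'd look at the ``backward wedge'' $W(\b+180^\circ, v)$ and argue it must contain an incident edge of $G$ whose other endpoint $u$ lies in $R(\b)$: otherwise $v$ would be separated from $s$ inside $R(\b)$, contradicting that the boundary $L(\b)\cup U(\b)$ are themselves reachable. Then by induction $u \in V(\b)$, and prepending a $\b$-path to $u$ with the edge $(u,v)$ gives a $\b$-path to $v$ — provided that prepended path's edges together with $(u,v)$ all lie in $W(\b,s)$, which holds because $\b$-monotonicity is translation-invariant in the sense that once $v \in W(\b,u)$ and the edges to $u$ are in $W(\b,s)$, one checks the angular width $90^\circ$ is preserved.

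I expect the main obstacle to be step (3): showing that an interior vertex $v$ of $R(\b)$ always has an incident $G$-edge going ``backward'' (toward smaller $\b$-coordinate) to \emph{another vertex of $R(\b)$}, not merely an edge/ray in the backward wedge. Property (1)/(2) of $G^+$ guarantee an incident edge or ray in any wedge, but a \emph{ray} is only added at hull vertices and points outward, so it won't help reach $v$ from inside; and an edge in the backward wedge might a priori leave $R(\b)$. The resolution I anticipate is a planarity argument: the path $L(\b)$ (resp. $U(\b)$) together with the boundary of $G^+$ encloses $R(\b)$, and any edge of the triangulation leaving $R(\b)$ through $L(\b)$ or $U(\b)$ would have to cross one of those monotone paths, which is impossible since they are built from the extreme (most clockwise / most counterclockwise) edges at each vertex — so an edge at $v$ ``below'' $L(\b)$ in angle would have been chosen by $L$, forcing $v \in L(\b) \subseteq V(\b)$ and closing the case immediately. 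Formalizing this ``extreme-edge'' blocking property cleanly — essentially that $L(\b)$ and $U(\b)$ act as barriers because they greedily hug the boundary of the reachable region — is the technical heart; once it is in place, the induction runs smoothly and also handles the boundary-edge subtleties foreshadowed in Fig.~\ref{Beta1Beta2}(b).
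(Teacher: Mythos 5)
Your overall skeleton is the paper's: order the vertices of $R(\b)$ by their projection onto direction $\b$ (the paper uses the extremal formulation ``leftmost lowest inaccessible vertex'' with $\b=45^\circ$, which is the same thing), find an edge $(u,v)$ entering $v$ through the backward wedge $\overline{W}(\b,v)$, conclude $u$ is accessible because it precedes $v$ in the order, and rule out $u\notin R(\b)$ by planarity ($e$ would have to cross an edge/ray of $G^+$ on the boundary of $R(\b)$, which is impossible). Your step (1) (that $R(\b)$ is a topological disk) is machinery the paper never needs, and your ``extreme-edge barrier'' refinement is likewise unnecessary once the crossing argument is in place.

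The genuine gap is in how you obtain the backward edge. You justify its existence by claiming that otherwise ``$v$ would be separated from $s$ inside $R(\b)$.'' That inference is not valid: graph connectivity to $s$ within $R(\b)$ does not force an incident edge to lie in one particular $90^\circ$ wedge at $v$ --- $v$ could a priori be joined to $s$ only by paths that leave $v$ in forward directions. You also correctly observe that property (1) only yields ``an edge \emph{or a ray}'' in the backward wedge and that a ray is useless here, but you never rule out the ray-only case. The paper closes both holes at once with the non-obtuse hypothesis: the direction from $v$ toward $s$ lies in $\overline{W}(\b,v)$ and points into the convex hull; the angular gap between consecutive edges/rays of $G^+$ at $v$ is at most $90^\circ$, so rotating the segment $sv$ about $v$ in one of the two directions hits an incident edge or ray before leaving the $90^\circ$-wide wedge, and because the rotation stays inside the hull near $v$ the first thing encountered is an actual edge of $G$, not an exterior ray. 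Without some form of this argument (non-obtuseness plus the hull-interior direction $vs$), your induction has no edge to recurse on. The remaining details you flag --- that appending $(u,v)$ preserves $\b$-monotonicity, and that $u$ strictly precedes $v$ in the sweep order --- do check out, since any vector in the closed backward wedge has strictly negative projection onto direction $\b$.
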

\begin{proof}
Wlog assume $\b=45^\circ$, so that the wedge rays are at $0$ and $90^\circ$.
Let $v \in V(\b)$ be the leftmost \emph{inaccessible} vertex, i.e., the leftmost vertex not
reached by a $\b$-path from $s$. 
If there are ties for leftmost, let $v$ be the lowest.
Consider the \bluenew{backward} wedge $\overline{W}(\b,v)$ at $v$. 
Note that $s$ lies in $\overline{W}(\b,v)$.  Consider the line segment $sv$.  It lies in $\overline{W}(\b,v)$ and inside the convex hull of $S$.  Imagine rotating $sv$ clockwise or counterclockwise about $v$ while remaining inside $\overline{W}(\b,v)$ and inside the convex hull of $S$ in a small neighborhood of $v$. 
Since there are no obtuse angles at $v$, rotating in one direction or the other must result in an edge or ray in  $\overline{W}(\b,v)$.
Furthermore,  the result cannot be a ray since we never leave the convex hull.  Thus we have identified an edge $e=(u,v)$ in $\overline{W}(\b,v)$.
\rednote{Rev1: ``I guess this part is unnecessarily complicated. The existence of this edge comes from property (1) applied to wedge $W(\pi+\beta,v)$."
JOR: I think I prefer the text as-is, bringing in the hull, and not-a-ray.
And I don't particularly like using $W(\pi+\beta,v)$.
But I will let you decide on whether to revise this argument.  AL: Yes, I agree, let's leave our text as is.}

Suppose first that $u$ is in $R(\b)$. Because $v$ is the leftmost lowest inaccessible
vertex, $u$ must be accessible 
(note that $u$ must be at \bluenew{the same $y$-height} or lower than $v$).
But now $v$ lies in $W(\b,u)$, and so \bluenew{$v$} is accessible after all, a contradiction.
Instead suppose $u$ lies outside $R(\b)$.
Then $e$ must cross the boundary of $R(\b)$. 
But that boundary is composed of edges/rays of $G^+$, and $e$ cannot cross
an edge of $G^+$ without the two sharing a vertex, which would lie on the
boundary of $R(\b)$, not the exterior, again a contradiction.
\end{proof}

\begin{figure}[htbp]
\centering
\includegraphics[width=0.75\linewidth]{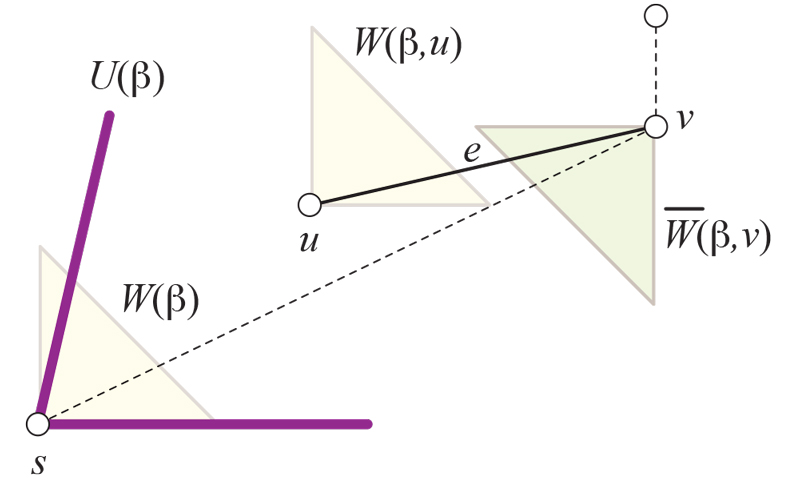}
\caption{No vertex in $R(\b)$ is inaccessible: all are reached by a $\b$-path from $s$.
}
\figlab{NotInaccessible}
\end{figure}

\rednote{Rev1: ``The first paragraph of the "Critical angles $\beta_i$" section is quite critical! I would rewrite it carefully. Some specific comments follow."}
\subsection{Critical angles $\b_i$}
\seclab{Critical}
We now 
analyze the relationships between $P(\b_i)$ and $P(\b_{i+1})$,
where $\b_{i+1}$ is the next ``relevant'' angle after $\b_i$,
with the goal of showing that all vertices in $G$ are
\bluenew{``covered'' over all $\b_i$,
i.e., belong to $P(\b_i)$ for some $\b_i$.}
\bluenew{Throughout, fix the source $s$, and let $W(\b) = W(\b,s)$.}
\rednote{JOR: Because reviewer wants $s$ or $v$ explicit.}
Define an angle $\b$ to be \emph{critical} if $P(\b+\e)$ or $P(\b-\e)$ differs
from $P(\b)$, for an arbitrarily small $\e > 0$.
\bluenew{
At a critical angle $\b$, one or both rays
of $W(\b)$ are parallel to one or more edges of $P(\b)$ 
}
\rednote{Rev1: ``It is not clear here what rays you are talking about. In fact, for these assertions to be true, you need to consider all the wedges centered at the vertices of $P(\beta)$ and not just the one centered at $s$. In these two sentences I would talk about angles; something like: "one or more edges with slope $\beta-45^\circ$ drop out of $P(\beta)$".
JOR: I don't agree with the reviewer. The reviewer is emphasizing $W(\b,v)$ as fundamentally
different from $W(\b) = W(\b,s)$, but that's not true for directions.
I think we just have to avoid implying that an edge lies directly on a bounding ray of the wedge.
So use ``parallel to" instead of ``lies along."
AL: Thank you, good fixes.
}
If $P(\b+\e)$ differs from $P(\b)$, one or more edges \bluenew{parallel to}
the $\b-45^\circ$ ray drop out of $P(\b)$.
If $P(\b-\e)$ differs from $P(\b)$, one or more edges \bluenew{parallel to}
the $\b+45^\circ$ ray enter $P(\b)$.
\bluenew{Fig.~\figref{Beta1Beta2} illustrates two adjacent critical angles.}

Let $\b_1, \b_2, \ldots, \b_i, \b_{i+1}, \ldots$ be the critical angles,
sorted counterclockwise.
For every $\b$ strictly between two adjacent critical angles,
$\b_i < \b < \b_{i+1}$, $P(\b)$ is the same.
We use the notation $P(\b_{i'})$ to represent this
intermediate set, which differs from $P(\b_i)$ or $P(\b_{i+1})$ or both.

In the transition from $P(\b)$ to $P(\b+\e)$, edges can drop out of $P(\b)$.
In particular, any edge $e=(u,v)$ that lies along the $\b-45^\circ$-ray of 
\bluenew{$W(\b,u)$} will drop out of $P(\b)$.  Furthermore, when edges drop out of $P(\b)$ this may cause vertices to drop out of $P(\b)$, and any edge originating at a dropped vertex also drops out of $P(\b)$. 
\begin{figure}[htbp]
\hspace*{-4mm}%
\centering
\includegraphics[width=1.1\linewidth]{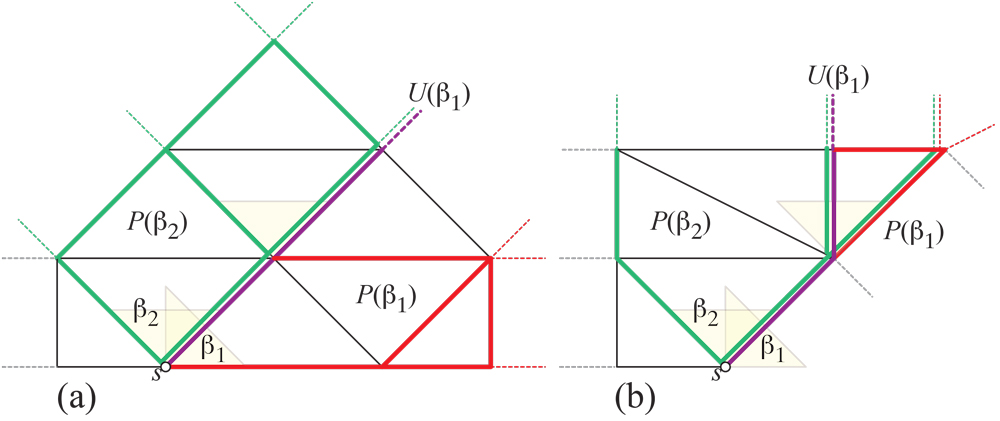}
\caption{$P(\b_1)$ (red), $U(\b_1)$ (purple), and $P(\b_2)$ (green).
(a)~$P(\b_1)$ includes some but not all $\bG$ edges.
(b)~$U(\b_1)$ includes an edge of $\bG$ before an internal edge.}
\figlab{Beta1Beta2}
\end{figure}

The next lemma shows that no vertices
fall strictly ``between'' $P(\b_i)$ and $P(\b_{i+1})$, 
where they would escape being spanned.

\begin{lemma}
$U(\b_i)$ is a path in $P(\b_{i+1})$.
\lemlab{UbetaShared}
\end{lemma}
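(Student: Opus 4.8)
The plan is to show that $U(\b_i)$ is itself a maximal $\b_{i+1}$-path starting at $s$, which is precisely what ``$U(\b_i)$ is a path in $P(\b_{i+1})$'' asserts. Two of the three ingredients are already at hand: $U(\b_i)$ is a maximal $\b_i$-path, so by Lemma~\lemref{maximal-beta} it ends with a ray, and a path ending with a ray is maximal. Hence everything reduces to one claim: \emph{every edge and every ray of $U(\b_i)$, viewed as a vector, lies in the wedge $W(\b_{i+1})$}. Normalising angles so that $\b_i<\b_{i+1}$ (they are consecutive critical angles, sorted counterclockwise) and recalling that every edge/ray of $U(\b_i)$ has direction in $(\b_i-45^\circ,\,\b_i+45^\circ]$ --- in the closed wedge $W(\b_i)$ because $U(\b_i)$ is a $\b_i$-path, and strictly above $\b_i-45^\circ$ by Observation~\ref{obs:no-lower} --- the upper end already lies in $W(\b_{i+1})$, so the only way the claim can fail is for some edge/ray of $U(\b_i)$ to have direction in the ``gap'' $(\b_i-45^\circ,\,\b_{i+1}-45^\circ)$ just below the lower ray of $W(\b_{i+1})$. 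So the task is to rule out such gap-edges.

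The core is a minimality-and-criticality argument, which I would set up as follows. Suppose for contradiction that $U(\b_i)$ does contain a gap-edge, and let $e_j$ be the \emph{first} one encountered along $U(\b_i)$, with direction $d\in(\b_i-45^\circ,\,\b_{i+1}-45^\circ)$. Set $\b^\ast:=d+45^\circ$, so $\b_i<\b^\ast<\b_{i+1}$; since no critical angle lies strictly between two consecutive ones, $\b^\ast$ is non-critical, hence $P(\b^\ast)=P(\b^\ast+\e)$ for all small $\e>0$. By the choice of $j$, every edge of $U(\b_i)$ preceding $e_j$ has direction in $[\b_{i+1}-45^\circ,\,\b_i+45^\circ]$, which lies strictly inside the wedge $W(\b^\ast)$ (as $\b_i<\b^\ast<\b_{i+1}$); therefore the prefix of $U(\b_i)$ ending just before $e_j$ is a legitimate $\b^\ast$-path, and its last vertex is reachable from $s$ by a $\b^\ast$-path. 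Since $d=\b^\ast-45^\circ$ lies on the (closed) lower ray of $W(\b^\ast)$, appending $e_j$ keeps a $\b^\ast$-path, which can be extended to a maximal one; thus $e_j\in E(\b^\ast)$. On the other hand $d$ lies strictly below the lower ray of $W(\b^\ast+\e)$, so $e_j\notin E(\b^\ast+\e)$, giving $P(\b^\ast)\ne P(\b^\ast+\e)$ and contradicting the non-criticality of $\b^\ast$. So $U(\b_i)$ has no gap-edge. Since $U(\b_i)$ contains at least its terminal ray and all its edge/ray directions lie in $(\b_i-45^\circ,\,\b_i+45^\circ]$, the absence of gap-edges is possible only if $\b_{i+1}-\b_i\le 90^\circ$, and then every edge/ray of $U(\b_i)$ has direction in $[\b_{i+1}-45^\circ,\,\b_i+45^\circ]\subseteq W(\b_{i+1})$. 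Therefore $U(\b_i)$ is a $\b_{i+1}$-path; being one that ends in a ray, it is maximal, so $U(\b_i)\in P(\b_{i+1})$.

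The step I expect to be the main obstacle, and the one to write out most carefully, is the claim $e_j\in E(\b^\ast)$: it hinges on the portion of $U(\b_i)$ before $e_j$ remaining a valid $\b^\ast$-path, which is exactly why I peel off the \emph{first} gap-edge, and why Observation~\ref{obs:no-lower} is pulling real weight --- it forces $\b^\ast>\b_i$ \emph{strictly}, so $\b^\ast$ is a genuine interior non-critical angle and not $\b_i$ itself. I would also check the degenerate case $j=1$, where the ``prefix'' is just the vertex $s$ (reachable by the trivial $\b^\ast$-path) and the argument goes through unchanged, and note that an ordinary edge $e_j$ and the terminal ray $e_j$ of $U(\b_i)$ are treated identically throughout.
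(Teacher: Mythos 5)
Your proof is correct, and its engine is the same as the paper's: Observation~\ref{obs:no-lower} is what saves $U(\b_i)$, because an edge of a $\b_i$-path can only become illegal as $\b$ increases if its direction is at (or just above) $\b_i-45^\circ$, and the upper envelope contains no such edge. The packaging differs in one genuine way. The paper routes the argument through the intermediate set $P(\b_{i'})$: it first reduces the claim to ``no edge of $U(\b_i)$ drops out in passing to $P(\b_i+\e)$'' (using that edges only \emph{enter} at the critical angle $\b_{i+1}$), and it silently relies on its earlier setup of critical angles for the fact that nothing further can drop out anywhere in the open interval $(\b_i,\b_{i+1})$. You instead verify directly that every edge/ray of $U(\b_i)$ has direction in $W(\b_{i+1})$, and the piece of work this forces on you --- ruling out directions in the ``gap'' $(\b_i-45^\circ,\,\b_{i+1}-45^\circ)$ --- is exactly the step the paper absorbs into its definition of consecutive critical angles; your construction of the would-be critical angle $\b^\ast=d+45^\circ$ proves that implicit fact explicitly, which makes your version more self-contained (and yields the slightly stronger conclusion that $U(\b_i)$ is itself an element of $P(\b_{i+1})$, not merely contained in it). The one point to watch is that your claim $e_j\in E(\b^\ast)$ needs the \emph{prefix} of $U(\b_i)$ before $e_j$ to be a genuine $\b^\ast$-path, not just a collection of edges individually present in $E(\b^\ast)$; you do handle this correctly by choosing $e_j$ as the first gap-edge, and this is in fact a spot where the paper's own wording (``because $u$ is in $P(\b')$'') is terser than yours.
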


\begin{proof}
Since edges may only enter, not leave, in the transition from $P(\b_{i+1}-\e)$ to $P(\b_{i+1})$,
the lemma is equivalent to the claim that  $U(\b_i)$ 
is a path in $P(\b_{i}+\e) = P(\b_{i'})$.
Let $\b=\b_i$, and $\b' = \b_{i'} $.
Then we aim to prove that
$U(\b) \subseteq P(\b')$.
This requires showing \bluenew{that} no edge $e \in U(\b)$ drops out from $P(\b)$ to $P(\b')$,
as $\b$ increases to $\b'$.
As usual, assume that $\b=45^\circ$.

Suppose to the contrary that some edge drops out,
and let $e=(u,v)$ be the leftmost lowest \bluenew{edge} with $e \in U(\b)$ but $e \notin P(\b')$.
Equivalently, $e$ is the first edge of $U(\b)$ that is not in $P(\b')$.
Because $u$ is in $P(\b')$, the only reason for $e$ to drop out is that it
lies along the lower, horizontal ray of the wedge $W(\b,u)$.
But by Observation~(\ref{obs:no-lower}), $U(\b)$ does not include any edge along the lower ray of the wedge.
\end{proof}

In analogy with the definition of $R(\b)$,
define $R(\b_i, \b_j)$ for $j > i$ to be the region bound by 
$L(\b_i)$, $U(\b_j)$, and the portion
of $\bG$ between those lower and upper envelope endpoints.

\begin{lemma}
$R(\b_i, \b_{i+1}) = R(\b_i) \cup R(\b_{i+1})$.
Informally, no vertices are ``orphaned'' between $P(\b_i)$ and $P(\b_{i+1})$.
\lemlab{NoOrphans}
\end{lemma}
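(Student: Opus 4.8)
The plan is to derive this from Lemma~\lemref{UbetaShared} together with its mirror image, and then to finish with a planar-disk decomposition. First I would prove the statement symmetric to Lemma~\lemref{UbetaShared}: the lower envelope $L(\b_{i+1})$ is a path in $P(\b_i)$. This is the reflection of that proof. Going from $P(\b_{i+1})$ to $P(\b_{i+1}-\e)=P(\b_{i'})$, the only edges that disappear are those parallel to the upper ray at angle $\b_{i+1}+45^\circ$; if $e=(u,v)$ were the first edge of $L(\b_{i+1})$ to drop, then $u$ is still reachable, so $e$ would lie along the upper ray of $W(\b_{i+1},u)$ --- impossible, since the most clockwise greedy rule defining $L$ never selects such an edge (the statement symmetric to Observation~(\ref{obs:no-lower}), proved the same way from property~(2)). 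Hence $L(\b_{i+1})\in P(\b_{i'})$, and since only edges re-enter as the angle decreases from $\b_{i'}$ to $\b_i$, also $L(\b_{i+1})\in P(\b_i)$.

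Next I would record two nesting facts. Every maximal $\b$-path from $s$ lies in $R(\b)$: two such paths are non-crossing subpaths of the plane graph $G^+$, so where they separate they separate at a common vertex, and there the greedy rule makes $U(\b)$ the most counterclockwise and $L(\b)$ the most clockwise continuation, pinning every path between them. Applying this with $U(\b_i)\in P(\b_{i+1})$ (Lemma~\lemref{UbetaShared}) yields $L(\b_{i+1})\preceq U(\b_i)\preceq U(\b_{i+1})$ in the cyclic order of paths around $s$, and applying it with $L(\b_{i+1})\in P(\b_i)$ yields $L(\b_i)\preceq L(\b_{i+1})\preceq U(\b_i)$. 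Since these comparisons agree, the four extreme paths occur around $s$ in the single order $L(\b_i)\preceq L(\b_{i+1})\preceq U(\b_i)\preceq U(\b_{i+1})$, and after truncating their terminal rays at $\bG$ their endpoints occur in this order along $\bG$.

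Now I would finish with the decomposition. Each of $R(\b_i,\b_{i+1})$, $R(\b_i)$, $R(\b_{i+1})$ is a topological disk bounded by two of these extreme paths and an arc of $\bG$. Since $L(\b_i)\preceq U(\b_i)\preceq U(\b_{i+1})$, the path $U(\b_i)$ lies inside $R(\b_i,\b_{i+1})$ with both endpoints on its boundary, and thus cuts $R(\b_i,\b_{i+1})$ into the subdisk bounded by $L(\b_i)$, a $\bG$-arc, and $U(\b_i)$, which is exactly $R(\b_i)$, and a subdisk $R^{+}$ bounded by $U(\b_i)$, a $\bG$-arc, and $U(\b_{i+1})$; since moreover $L(\b_{i+1})\preceq U(\b_i)$, the disk $R^{+}$ is contained in the disk bounded by $L(\b_{i+1})$, a $\bG$-arc, and $U(\b_{i+1})$, namely $R(\b_{i+1})$. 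Hence $R(\b_i,\b_{i+1})=R(\b_i)\cup R^{+}\subseteq R(\b_i)\cup R(\b_{i+1})$. The reverse inclusion is immediate from the same ordering: $U(\b_i)$ lies between $L(\b_i)$ and $U(\b_{i+1})$, so $R(\b_i)\subseteq R(\b_i,\b_{i+1})$, and $L(\b_{i+1})$ lies between $L(\b_i)$ and $U(\b_{i+1})$, so $R(\b_{i+1})\subseteq R(\b_i,\b_{i+1})$. Together with Lemma~\lemref{RincludesVerts}, every vertex of $R(\b_i,\b_{i+1})$ is then reached by a $\b_i$-path or a $\b_{i+1}$-path, so no vertex is orphaned.

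I expect the main obstacle to be the bookkeeping behind the second paragraph: $L(\b_{i+1})$ and $U(\b_i)$ are monotone with respect to two different directions, so the relation $\preceq$, the region between two paths, and the disk-cutting above must all be phrased purely combinatorially --- via the cyclic order of first edges around $s$ and the fact that paths in a plane graph cannot cross --- and one must verify both that the three comparisons are mutually consistent (here it helps that adjacent critical angles differ by at most $90^\circ$, since $P(\b)$ is constant between them while every path of $P(\b)$ has an edge or ray whose direction must lie in $W(\b)$ throughout that interval) and that the region bounded by $L(\cdot)$ and $U(\cdot)$ varies monotonically with the ordering. Once that is set up, the other two steps are routine.
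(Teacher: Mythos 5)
Your proposal is correct and takes essentially the same route as the paper, whose entire proof is that the lemma ``follows immediately from the fact that $U(\b_i) \subseteq P(\b_{i+1})$'' (Lemma~\lemref{UbetaShared}); you have simply filled in the non-crossing/ordering and disk-decomposition details that the paper leaves implicit. Note that your symmetric lemma for $L(\b_{i+1})$ is not actually needed: $U(\b_i)\in P(\b_{i+1})$ already pins $U(\b_i)$ between $L(\b_{i+1})$ and $U(\b_{i+1})$, which is all the decomposition requires.
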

\begin{proof}
\rednote{AL: We could avoid the fuss by taking the "Informally . . " sentence out of the Lemma statement and giving as proof:}
\cyannew{The lemma essentially says that no vertices are ``orphaned'' between $P(\b_i)$ and $P(\b_{i+1})$, and this follows immediately from the fact that $U(\b_i) \subseteq P(\b_{i+1})$ as established in Lemma~\lemref{UbetaShared}.}
%
\end{proof}

\noindent
Now we can prove Theorem~\thmref{PairPath}, the key result of~\cite{dfg-icgps-15}:
\begin{proof}[of Theorem~\thmref{PairPath}]
Fix $s$ and construct $\bigcup_i P(\b_i)$.
By Lemma~\lemref{NoOrphans}, this is a spanning graph of $G$, and so must
include a path from $s$ to $v$.
\end{proof}

Our arguments extend to wedges of any width $\gamma$, thus proving that 
if a plane geometric graph that includes the convex hull of $S$ has all internal angles at most $\gamma$, then there is an angle-monotone path of width $\gamma$ between any two vertices.
This answers a question raised in~\cite{bbcklv-gtamg-16}.

\rednote{Rev1:
``Section 3 suffers from the fact that you forgot to define angle-monotone spanning trees. I take that you mean the following: For a fixed node s, the required spanning tree T(s) contains an angle-monotone path between s and every other node. But the name leaves room for several variations (is the angle of monotonicity required to be the same for all the paths in the tree? Are the paths required to exist between any pair of nodes and not only from a source s to all the other nodes?), so please introduce a definition."}
\section{Spanning Tree}
\seclab{SpanningTree}
Now that Theorem~\thmref{PairPath} has established that there is
a graph spanning all of $G$ with angle-monotone paths from any source $s$,
it is natural to wonder if the claim can be strengthened to the existence of
\cyannew{an \emph{angle-monotone spanning tree}} for any $s$: 
\bluenew{a tree rooted at $s$ with an angle-monotone path
from $s$ to any $v \in G$.}
The answer is {\sc no}, but we canvass a few positive results
before detailing a counterexample for spanning trees.
\bluenew{Throughout, we let $s$ be an arbitrary vertex of $G$.}
\rednote{JOR: Because reviewer seems to need this explicit.}
First, within a fixed $\b$ region, $P(\b)$ can be easily spanned:
\begin{lemma}
$P(\b)$ includes \bluenew{an angle-monotone tree} that spans the same vertices, $V(\b)$.
\lemlab{PbetaStree}
\end{lemma}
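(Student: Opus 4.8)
The plan is to extract a spanning tree from the set of maximal $\b$-paths $P(\b)$ directly, by a greedy/BFS-style construction that keeps, for each vertex $v \in V(\b)$, exactly one incoming edge on some $\b$-path from $s$. Concretely, I would process the vertices of $V(\b)$ in increasing order along the $\b$-direction (assume as usual $\b = 45^\circ$, so this is order of increasing $x$-coordinate, breaking ties by increasing $y$); for each vertex $v \neq s$, Lemma~\lemref{RincludesVerts}'s argument (or simply the definition of $V(\b)$) guarantees an edge $e = (u,v) \in E(\b)$ lying in the backward wedge $\overline{W}(\b,v)$ with $u$ earlier in the order. Select one such edge as the tree-parent edge of $v$. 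The resulting edge set $T$ has exactly $|V(\b)| - 1$ edges, one per non-source vertex, so it is a tree (no cycles, since every edge points strictly forward in the order and the graph is connected by construction back to $s$).

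The second thing to verify is that the path in $T$ from $s$ to any $v$ is itself $\b$-monotone. This is where the key property of the wedge direction does the work: following parent edges backward from $v$, each step $(u_{i}, u_{i+1})$ of the tree path lies in $W(\b, u_i)$ — but since all these edge vectors point into a common wedge of width $90^\circ$ (every edge $(u_i,u_{i+1})$ has direction within $\b \pm 45^\circ$), and the wedge condition is translation-invariant in direction, the concatenated path from $s$ has every edge vector in $W(\b, s)$. So $T$ is an angle-monotone tree, all of whose root-to-vertex paths are $\b$-monotone with the \emph{same} $\b$; in particular it spans $V(\b)$. I would present this as: order vertices, pick a backward-wedge parent edge for each, argue acyclicity from the strict ordering, and argue monotonicity from direction-additivity within a $90^\circ$ wedge.

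**The main obstacle** I anticipate is not the tree structure but making sure a backward-wedge in-edge genuinely exists in $E(\b)$ for every $v \in V(\b)$, as opposed to just existing in $G^+$. Membership $v \in V(\b)$ means $v$ lies on \emph{some} maximal $\b$-path from $s$; the edge of that path entering $v$ is in $E(\b)$ and lies in $W(\b, u) $ for its tail $u$, hence its reverse lies in $\overline{W}(\b,v)$, and $u$ precedes $v$ in the $\b$-order (since edge vectors in $W(\b,\cdot)$ have nonnegative $x$-component, and zero $x$-component forces, by Observation~\ref{obs:no-lower}-type reasoning, a strictly smaller $y$ or is excluded). So the existence is immediate from the definition of $V(\b)$ — the only care needed is the tie-breaking in the ordering to rule out a horizontal parent edge creating a cycle among equal-$x$ vertices, which the "lowest among ties" convention handles exactly as in the proof of Lemma~\lemref{RincludesVerts}. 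A clean alternative write-up: take the union over $v$ of a single chosen $s$-to-$v$ $\b$-path in $P(\b)$, then observe this subgraph, being a union of monotone paths from a common source, contains a spanning tree each of whose root paths is monotone (a shortest-path / BFS tree in this subgraph works, since any prefix of a $\b$-path is a $\b$-path).
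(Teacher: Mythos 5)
Your proposal is correct and takes essentially the same approach as the paper: both constructions keep exactly one incoming ($\b$-forward) edge of $P(\b)$ per vertex of $V(\b)$ and deduce acyclicity from the fact that every wedge edge points strictly forward in the $\b$-direction. Your write-up merely makes explicit the vertex ordering and the monotonicity of root-to-vertex paths, which the paper's terser proof leaves implicit.
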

\begin{proof}
\bluenew{By Lemma~\lemref{RincludesVerts}, $P(\b)$ reaches every vertex in $V(\b)$.}
For each vertex $v \in V(\b)$, in any order, delete all but one incoming edge to $v$.
Because an incoming edge remains to each $v$, $v$ is spanned.
Because eventually no $v$ has more than one incoming edge, no cycles can remain.
See Fig.~\figref{PbetaStree}. 
\end{proof}
\begin{figure}[htbp]
\centering
\includegraphics[width=0.5\linewidth]{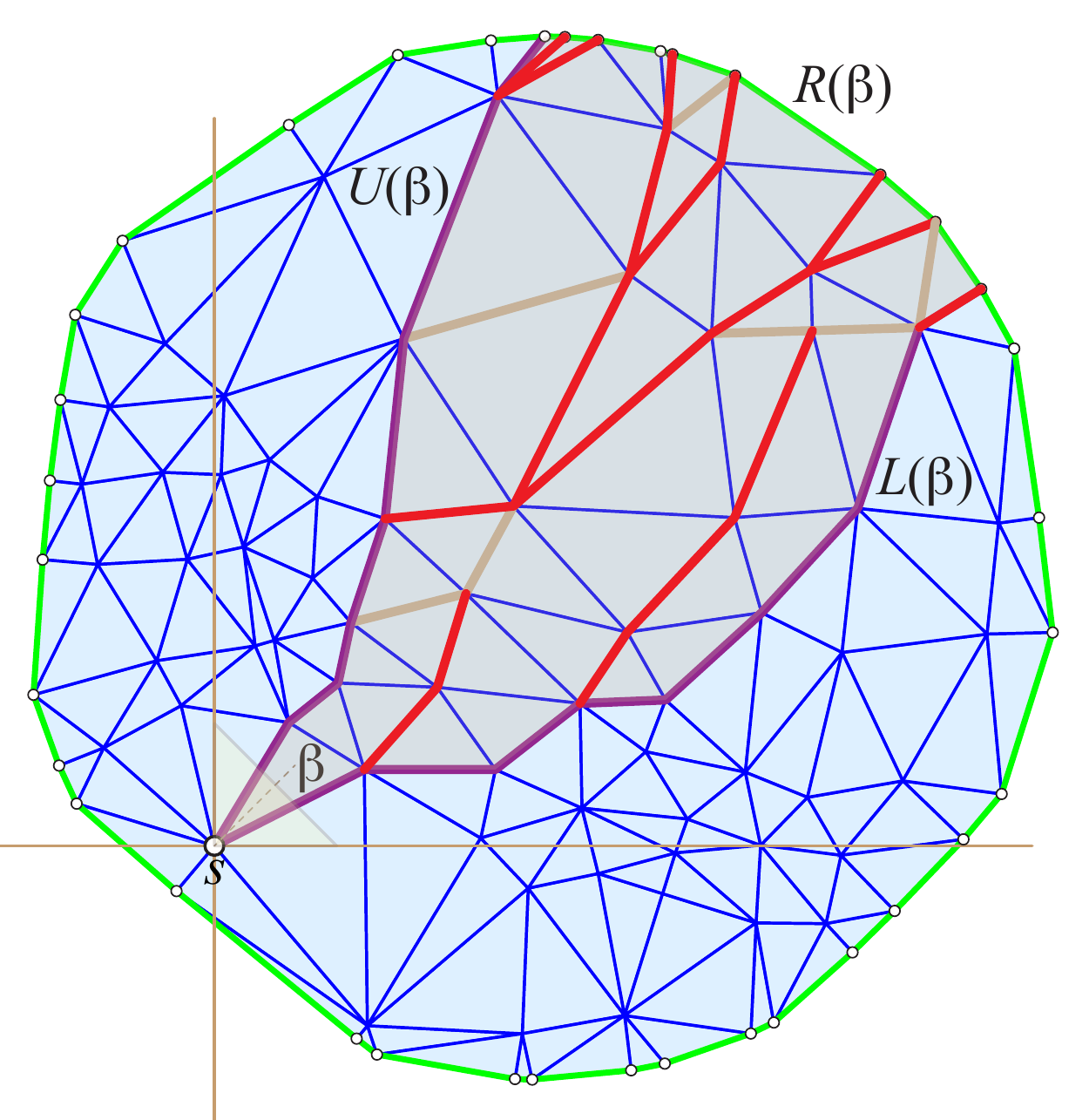}
\caption{$P(\b)$ spanning tree. Light-brown edges have been deleted.}
\figlab{PbetaStree}
\end{figure}

We now consider triangulations with special angles.

\begin{lemma}
Let $G_{45^\circ}$ have edges only at multiples of $45^\circ$.
Then there is an angle-monotone
spanning tree rooted at any source vertex $s$.
\lemlab{k45}
\end{lemma}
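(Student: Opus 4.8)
The plan is to exploit the rigidity that comes from all edges having slopes that are multiples of $45^\circ$: there are only four edge directions up to sign, namely horizontal, vertical, and the two diagonals, so an edge $e=(u,v)$ fits inside a $90^\circ$ wedge $W(\b,s)$ for a $\b$ that is itself a multiple of $45^\circ$ (e.g.\ $\b\in\{0^\circ,45^\circ,90^\circ,135^\circ\}$), with the edge lying strictly interior to the wedge or exactly along a bounding ray. First I would observe that, because of this, the only critical angles $\b_i$ in the sense of Section~\secref{Critical} are (a subset of) the four values $\{0^\circ,45^\circ,90^\circ,135^\circ\}$: between consecutive multiples of $45^\circ$ the set of edge-slopes contained in the open wedge does not change, so $P(\b)$ is constant there, and transitions can happen only as a bounding ray passes through one of the four allowed slopes. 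Thus $G_{45^\circ}$ is ``covered'' by at most four sets $P(\b_1),\dots,P(\b_4)$ (one can even restrict attention to $\b\in\{45^\circ,135^\circ\}$, the two wedges that together cover all four directions, but four is fine).

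Next, the idea is to build the spanning tree greedily across these finitely many angles, in increasing order of $\b_i$, reusing the ``keep one incoming edge per vertex'' construction from Lemma~\lemref{PbetaStree} but done cumulatively. Concretely: start with $T_1$, an angle-monotone spanning tree of $V(\b_1)$ inside $P(\b_1)$ obtained from Lemma~\lemref{PbetaStree}. Inductively, suppose $T_i$ is a tree rooted at $s$ spanning $V(\b_1)\cup\dots\cup V(\b_i)$ in which every vertex is reached by an angle-monotone path. To pass to $T_{i+1}$, add the vertices of $V(\b_{i+1})\setminus(V(\b_1)\cup\dots\cup V(\b_i))$ by, for each such vertex $v$ in an order that processes a vertex only after its predecessor on some $\b_{i+1}$-path is already present, attaching $v$ via one incoming $\b_{i+1}$-edge from an already-attached vertex. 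Lemma~\lemref{RincludesVerts} (every vertex of $R(\b_{i+1})$ is reached by a $\b_{i+1}$-path) and Lemma~\lemref{UbetaShared}/Lemma~\lemref{NoOrphans} guarantee that such an edge into an already-present vertex exists, so the process terminates with $T_{i+1}$ still a tree (each newly added vertex gets exactly one parent, creating no cycle) spanning the enlarged vertex set. After at most four rounds, by Lemma~\lemref{NoOrphans} the vertex set is all of $G$, and $T:=T_4$ is the desired angle-monotone spanning tree rooted at $s$.

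The step I expect to be the main obstacle is verifying that the path from $s$ to a newly attached vertex $v$ in $T_{i+1}$ really is angle-monotone, not merely that $v$ is reachable by \emph{some} $\b_{i+1}$-path in $P(\b_{i+1})$. The subtlety is that the parent of $v$ in $T_{i+1}$ may have been attached during an earlier round $j\le i$ using a $\b_j$-path, and concatenating a $\b_j$-prefix with a $\b_{i+1}$-edge need not be monotone for any single width-$90^\circ$ wedge. The fix I would pursue is to insist that whenever a vertex $u$ lies in $V(\b_{i+1})$ we (re)route its tree path using $\b_{i+1}$-edges: because $U(\b_i)\subseteq P(\b_{i+1})$ (Lemma~\lemref{UbetaShared}), the ``frontier'' $U(\b_i)$ common to consecutive rounds can serve as the backbone along which old and new portions are stitched, so that each vertex ends up with a path that is monotone for the largest critical angle whose region contains it. Pinning down this rerouting cleanly — essentially, that one can choose, for every vertex, a single $\b\in\{0^\circ,45^\circ,90^\circ,135^\circ\}$ and a $\b$-path lying entirely within edges that remain in the final tree — is the crux; the special $45^\circ$ geometry (only four directions, hence $P$ changes at only four angles, and the wedges $W(45^\circ)$ and $W(135^\circ)$ already tile all edge directions) is what makes it tractable here even though it fails for general triangulations.
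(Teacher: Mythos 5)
Your setup is sound (the critical angles are multiples of $45^\circ$, and each $P(\b_i)$ can be internally spanned as in Lemma~\lemref{PbetaStree}), and you correctly locate the difficulty: when you attach a vertex $v$ by a $\b_{i+1}$-edge to a parent that was reached in an earlier round by a $\b_j$-path, the concatenated path need not be angle-monotone for any single $\b$. But you do not resolve that difficulty --- you propose ``rerouting along $U(\b_i)$'' and then explicitly concede that pinning this down is the crux. That concession is exactly where the proof is missing, so as written this is a genuine gap, not a complete argument.

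The idea the paper uses to close that gap is a divergence property of consecutive upper envelopes, which is where the $45^\circ$ hypothesis actually does its work. By Observation~\ref{obs:no-lower}, $U(\b)$ contains no edge parallel to the lower ray of the wedge, so in $G_{45^\circ}$ the envelope $U(\b_1)$ uses only edge directions $\b_1$ and $\b_1+45^\circ$, while $U(\b_2)$ (with $\b_2=\b_1+45^\circ$) uses only directions $\b_1+45^\circ$ and $\b_1+90^\circ$. The single shared direction $\b_1+45^\circ$ means the two envelopes can run parallel or coincide on a prefix, but once they separate they can never rejoin. Consequently one may keep \emph{all} consecutive upper envelopes intact as a cycle-free skeleton, and then delete all but one incoming edge at each vertex lying strictly between $U(\b_i)$ and $U(\b_{i+1})$; every root-to-vertex path in the result lies entirely inside a single $P(\b_{i+1})$ (using $U(\b_i)\subseteq P(\b_{i+1})$ from Lemma~\lemref{UbetaShared}), hence is $\b_{i+1}$-monotone. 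Without some such structural claim, your cumulative attachment scheme has no guarantee that the final tree paths are monotone for one fixed angle each --- and indeed the paper notes that the scheme fails already for $G_{30^\circ}$, precisely because the divergence property is lost there. (A small side error: restricting to $\b\in\{45^\circ,135^\circ\}$ does not suffice, since those two wedges at $s$ cover only the upper half-plane of outgoing directions; you need all four quadrant orientations.)
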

\begin{proof}
Let $\b_1$ and $\b_2 = \b_1 + 45^\circ$ be two consecutive critical angles.
We argue that $U(\b_1)$ and $U(\b_2)$ may share an initial portion of a path,
but then diverge and do not rejoin before reaching 
their terminal rays.

By Observation~(\ref{obs:no-lower}), $U(\b)$
only includes edges at angles $\b_i$ or $\b_i+45^\circ$.
So the most counterclockwise edge in 
$U(\b_1)$
is $\b_1+45^\circ$,
and the most clockwise edge in 
$U(\b_2)$
is $\b_2 = \b_1 + 45^\circ$.
Thus, $U(\b_1)$ and $U(\b_2)$
can \bluenew{have parallel edges}, but once
they separate, they can never rejoin.

Now it is easy to create a spanning tree between 
$U(\b_1)$ and $U(\b_2)$
that retains all edges in these two envelopes, by deleting all but one
incoming edge to each vertex between the envelopes.
\end{proof}

\medskip
\noindent
\bluenew{The problematic possibility avoided in such $G_{45^\circ}$ graphs
is $U(\b_1)$ and $U(\b_2)$ joining, separating, and rejoining.}
Already in a graph $G_{30^\circ}$ that has edges only at multiples of $30^\circ$,
the divergence of upper envelopes used in Lemma~\lemref{k45} is no longer
guaranteed, and thwarts that proof.

\rednote{JOR:
I flipped Fig.~\figref{Ubeta_k45} horizontally to save space, but if
we cannot fit into $6$ pages, this figure could go.}

\subsection{Spanning Tree Counterexample}
\seclab{Cex}
Fig.~\figref{SpanningTreeCex} shows a graph $G$ that does not have
an angle-monotone spanning tree rooted at $s$.
The construction allows two \cyannew{angle-monotone} paths to each of $\{C,D,E,F\}$, one of which is
marked green in the figure. But vertices $A$ and $B$ are shifted slightly
toward one another, which breaks the symmetry and, \cyannew{as we shall argue below}, results in a unique angle-monotone path to each. 
\bluenew{The union of those two unique paths contains the cycle $(s,a,x,b)$.}
\cyannew{Thus there is no angle-monotone spanning tree from $s$.}

\cyannew{We now argue that there is no angle-monotone path to $A$ other than $saxA$.
This is simply a matter of checking that any other path to $A$ contains two \emph{spread-apart} edges whose vectors do not lie in a  $90^\circ$ wedge.  In particular, the path $sbxA$ contains spread-apart edges $sb$ and $xA$, and 
the path $sawA$ contains spread-apart edges $aw$ and $wA$.  Other paths can be checked similarly.}
\rednote{AL: Maybe the above argument is clearer.  If you agree, we can delete the next paragraph.}
Similar reasoning constrains the \bluenew{(symmetric)} paths to $B$.
\begin{figure}[htbp]
\centering
\includegraphics[width=1.05\linewidth]{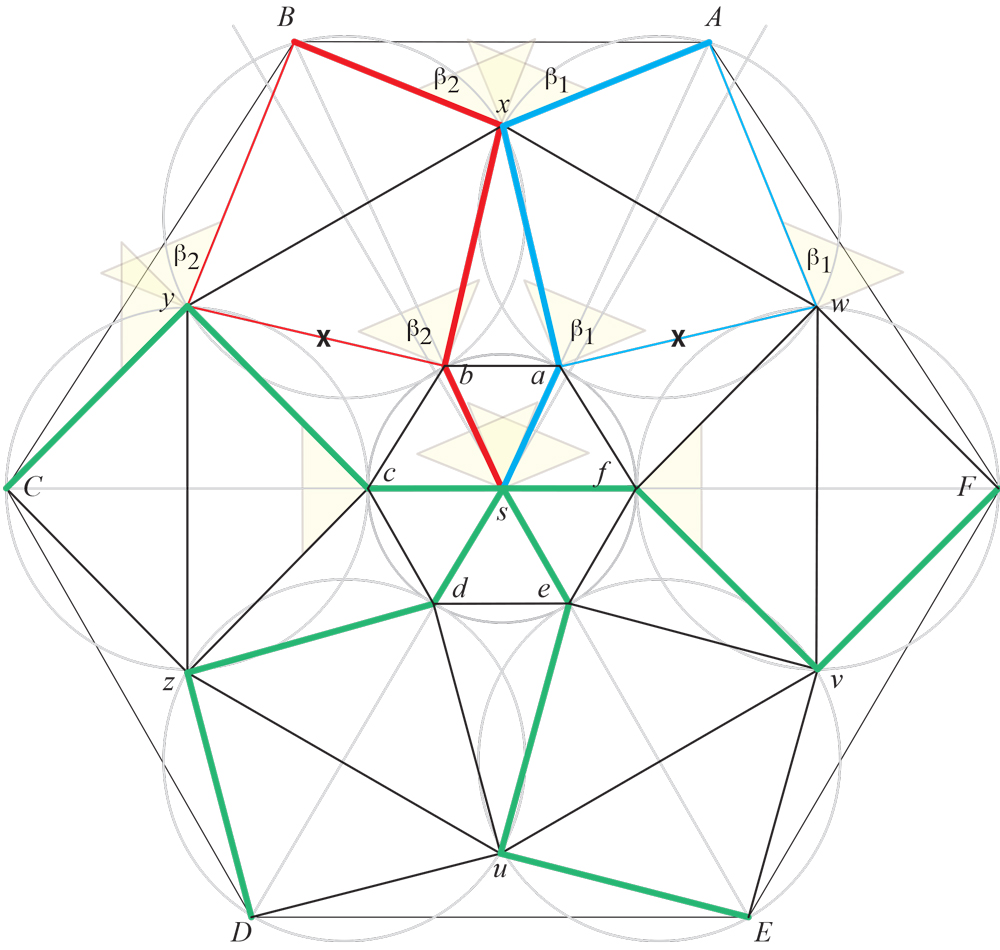}
\caption{$(s,a,x,A)$ is the unique \cyannew{angle-monotone} 
path to $A$,
and $(s,b,x,B)$ is the unique \bluenew{angle-monotone} path to $B$, \bluenew{forming the cycle $(s,a,x,b)$}.}
\figlab{SpanningTreeCex}
\end{figure}
\rednote{Rev1:
``Given the contents of Section 2, it would be much nicer to have a counter-example to the existence of a spanning tree in a Gabriel triangulation (the one you present looks like a Gabriel triangulation, but there are some disturbing angles equal to $90^\circ$ and the outer boundary is not convex)."
JOR: I added edges to make it convex, and I am pretty certain they do not break
the example, but I left in the $90^\circ$ angles.
I believe that the reviewer is asking is: How robust is our cex? Does it require $=90^\circ$ angles?
Good question. I think it *is* robust, but this needs thought... Here is my conclusion:}

\bluenew{The outer ring of six circles in the construction make clear that various
diameter-spanning angles are $90^\circ$, but points
$\{a,A, \ldots, f,F\}$ could be moved slightly exterior to those circles, rendering those angles
$< 90^\circ$, while retaining the properties that force the $(s,a,x,b)$ cycle.
So the counterexample is ``robust" in this sense.}
\rednote{AL: Thank you, this seems good.}

\section{Spanning Forest}
\seclab{SpanningForest}
For the unfolding application discussed in the next section, it is useful to
span $G$ by a boundary-rooted forest $\F$: A set of disjoint \cyannew{angle-monotone} 
trees,
each with its root on $\bG$, and spanning every interior vertex of $G$.
This can be achieved with \cyannew{$\b$-monotone trees for} just four $\b$ values.

With a Cartesian coordinate system centered on vertex $s$, define the
quadrants $Q_0,Q_1,Q_2,Q_3$ as follows.
$Q_0$ is the quadrant coincident with
$W(\b,s)$ when $\b=45^\circ$, closed along the $x$-axis
and open along the $y$-axis, and includes the origin $s$. 
$Q_i$, $i>0$ are defined analogously,
except those quadrants do not include the origin.
Thus the quadrants are pairwise disjoint and together cover the plane.

We construct separate spanning forests for each quadrant,
following Algorithm~1, which grows paths from vertices interior to $G$
to $\bG$. 
See Fig.~\figref{SpanningForest}.
\begin{figure}[htbp]
\centering
\includegraphics[width=0.75\linewidth]{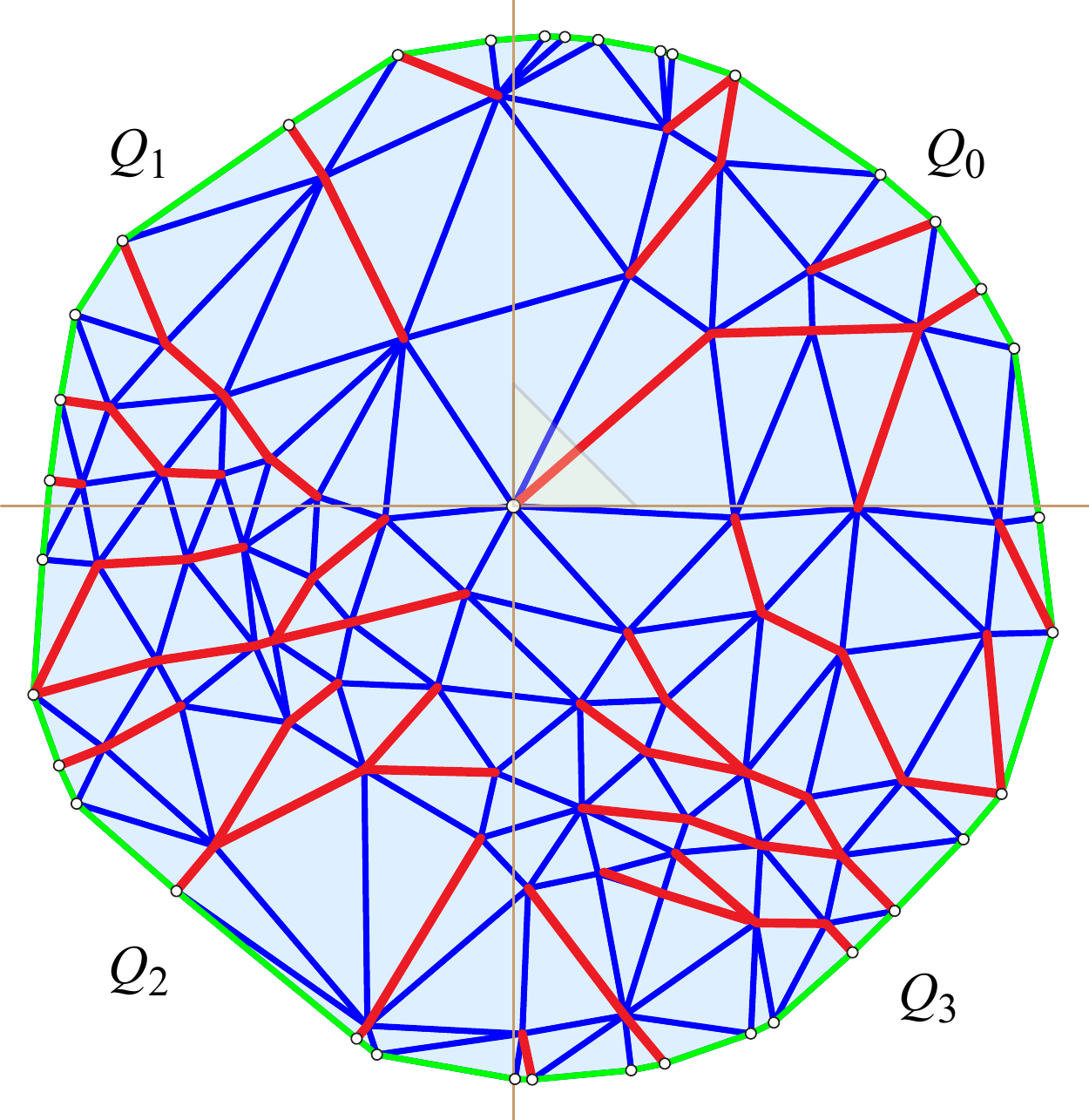}
\caption{Spanning forest resulting from Algorithm~1.}
\figlab{SpanningForest}
\end{figure}

\begin{algorithm}[htbp]
\caption{Algorithm to construct spanning forest $\F$}
\DontPrintSemicolon 
    \SetKwInOut{Input}{Input}
    \SetKwInOut{Output}{Output}

    \Input{Non-obtuse triangulation $G$}
    \Output{Spanning forest $\F$ composed of $\b$-monotone paths}
    
    \BlankLine
    \tcp{Quadrants $Q_j$, each \bluenew{corresponding to}}
    \tcp{$\b_j = 45^\circ + j \cdot 90^\circ$, $j=0,1,2,3$.}
    \BlankLine
    
     \ForEach{Quadrant $Q_j$, $j=0,1,2,3$}{
     
         $F_j \leftarrow \varnothing$
         
         \tcp{Grow forest $F_j$ inside $Q_j$.}
    	 \ForEach{$v \in Q_j$}{
	 
	 \If{$v \notin F_j$}
	 {Grow $\b_j$-path $p$ from $v$.\;
	 Stop when $p$ reaches a vertex in $F_j$, or reaches $\bG$.}
	 
	 $F_j \leftarrow F_j \cup p$
   	  
   	  }

     }
     \BlankLine
     
     $\F = F_0 \cup F_1 \cup F_2 \cup F_3$\;
     \KwRet $\F$.
\end{algorithm}

\rednote{Rev1:
``For $j=0,1,2,3$, the trees in $Q_j$ are monotone with respect to the angle $225^\circ + j \cdot 90^\circ$ and not with respect to the angle $45^\circ + j \cdot 90^\circ$ (if I got the definition of $\beta$-monotone trees right and the paths go from the source to the nodes).
JOR: I think this is splitting hairs. The paths are at $45^\circ + j \cdot 90^\circ$, but,
yes, the trees are aimed the other way.}

\rednote{Rev1:
``In the description of Algorithm 1 output and in the statement of Lemma 8, one gets the false impression that the forest is composed of connected components which are paths."
JOR: Certainly the figure removes that confusion, but I added clarification.}

\begin{lemma}
Algorithm~1 outputs a boundary-anchored spanning forest,
\bluenew{each tree of which is}
composed of $\b$-monotone paths, for four $\b$'s:
$\b_j = 45^\circ + j \cdot 90^\circ$, $j=0,1,2,3$.
\lemlab{SpanningForest}
\end{lemma}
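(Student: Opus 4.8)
The plan is to verify, for the output $\F = F_0 \cup F_1 \cup F_2 \cup F_3$ of Algorithm~1, four claims: (i) each path-growth in the inner loop terminates; (ii) each $F_j$ is a forest whose roots lie on $\bG$ (and the $F_j$ together are vertex-disjoint off $\bG$, since the quadrants $Q_j$ partition the plane); (iii) every vertex of $Q_j$ ends up in $F_j$, so $\F$ spans all of $G$, in particular every interior vertex; and (iv) every root-path of $F_j$ is $\b_j$-monotone. The single observation powering all of these is that the wedge depends only on the \emph{direction} $\b$ and not on its apex: ``$\b$-monotone'' is a condition on edge \emph{directions} alone, and every edge vector inside a $90^\circ$ wedge at $\b$ has strictly positive component along the direction $\b$ (the extreme rays $\b\pm45^\circ$ have component $\cos 45^\circ>0$), so the $\b_j$-component is a strictly increasing potential along any $\b_j$-path.

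For termination and the forest structure, fix $Q_j$. When a $\b_j$-path $p$ is grown from a vertex $v$, property~(1) in $G^+$ guarantees it can always be extended until it reaches $\bG$, and by Lemma~\lemref{maximal-beta} a maximal $\b_j$-path indeed ends on $\bG$; since the $\b_j$-component strictly increases along $p$, the path is simple and hence finite. The algorithm may instead stop $p$ on first reaching a vertex already in $F_j$; either way the growth is well-defined and finite. Now when $p=(v=u_0,\dots,u_k)$ is appended, by the stopping rule $u_0,\dots,u_{k-1}$ are new and each acquires exactly one out-edge $(u_i,u_{i+1})$, while $u_k$ is either on $\bG$ or already present and acquires no out-edge; any edge added to these vertices in later iterations is incoming (later paths stop on first meeting $F_j$). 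Thus in $F_j$ every vertex has out-degree $\le 1$, with out-degree $0$ exactly for its $\bG$-vertices, and following out-edges strictly increases the $\b_j$-component, so there is no directed cycle. Hence $F_j$ is a forest of in-trees rooted on $\bG$.

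For spanning, the outer loop visits every $v\in Q_j$ and grows a path containing $v$ whenever $v\notin F_j$, so every vertex of $Q_j$ lies in $F_j$; since $Q_0,\dots,Q_3$ cover the plane, $\F$ spans $G$, hence every interior vertex. For monotonicity, every edge of $F_j$ was inserted as an edge of some $\b_j$-path, so its direction lies in the $90^\circ$ cone at $\b_j\pm45^\circ$; by translation-invariance of that cone, the unique $F_j$-path from any vertex $v$ to its $\bG$-root has all edge vectors in $W(\b_j,v)$, i.e.\ is a $\b_j$-path. This gives exactly the four $\b_j=45^\circ+j\cdot90^\circ$ claimed.

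The step I expect to be the main obstacle is making rigorous the interaction between independently grown paths: when the path from $v$ is truncated upon meeting a vertex $u$ already in $F_j$, one must argue that the concatenation of the $v$-to-$u$ portion with the pre-existing $u$-to-root portion is still a \emph{single} $\b_j$-monotone path and that no cycle is created. Everything reduces to the opening observation---that being $\b_j$-monotone is a statement about edge directions only, and that the $\b_j$-component is a strictly monotone potential---but this is precisely what licenses the greedy, order-independent construction in Algorithm~1, so it deserves to be stated explicitly rather than left implicit.
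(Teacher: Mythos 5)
Your proposal is correct and follows essentially the same route as the paper's (much terser) proof: grow $\b_j$-paths greedily within each quadrant, stop on meeting $F_j$ or $\bG$, note that spanning follows from the loop over all $v\in Q_j$ and acyclicity from the stopping rule. Your explicit observations that $\b_j$-monotonicity is a direction-only condition (so concatenations of $\b_j$-paths are $\b_j$-paths, and each path stays inside its quadrant) and that the $\b_j$-component is a strictly increasing potential are exactly the facts the paper leaves implicit, so your write-up is a more detailed version of the same argument rather than a different one.
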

\begin{proof}
Observe that a $\b_j$-path grown from $v \in Q_j$ remains in $Q_j$.
So all the trees in $F_j$ are composed of $\b_j$-paths.
All vertices of each quadrant are spanned, because the inner loop
of Algorithm~1 runs over all $v \in Q_j$.
No cycles can be created because the algorithm only grows
a path from $v$ if $v$ is not yet in $F_j$. So $v$ becomes a leaf
of some tree in $F_j$ when its path reaches that tree.
\end{proof}

\section{Unfolding}
\seclab{Unfolding}
Now we \bluenew{discuss} an application of Algorithm~1 and Lemma~\lemref{SpanningForest}
to edge-unfolding nearly flat convex caps.
We \bluenew{only} sketch the argument, 
\bluenew{as several steps need considerable elaboration, and other steps rely}
on definitions and 
details in an unpublished report~\cite{o-ucprm-16}.
So this section will end with a conjecture rather than a theorem.
At a high level, the construction depends on two claims:
(1)~angle-monotone paths are ``radially monotone paths,'' a concept introduced
in~\cite{o-ucprm-16}, but known before as backwards ``self-approaching curves"~\cite{ikl-sac-99}.
(2)~Theorem~2 
of~\cite{o-ucprm-16} concludes that the unfolding
of a particular ``medial" cut path $M$ on a polyhedron is radially monotone and so does not self-cross
when unfolded
(if certain angle conditions are satisfied).

Let $P$ be a convex polyhedron, and let $\phi(f)$ for a face $f$ be
the angle the normal to $f$ makes with the $z$-axis.
Let $H$ be a halfspace whose bounding plane is orthogonal to the $z$-axis, and includes points
vertically above that plane.
Define a \emph{convex cap} $C$ of angle $\Phi$ to be $C=P \cap H$
for some $P$ and $H$, such that $\phi(f) \le \Phi$ for all $f$ in $C$.
We will only consider $\Phi < 90^\circ$, which implies that the projection
$C_\bot$ of $C$ onto the $xy$-plane is one-to-one.

Say that a convex cap $C$ is \emph{acutely triangulated}
if every angle of every face is strictly acute.
Note that $P$ being acutely triangulated does not always imply that 
$C=P \cap H$ is acutely triangulated,
\rednote{Rev1:
``By cutting a simplicial polyhedron with a plane, you might get some faces which are quadrilaterals and not triangles. In there your angles have to be $\ge 90^\circ$. Is there anything that needs to be said about that?"
JOR: No, because the assumption is that $P$ is acutely triangulated, not that some procedure
gives it to us acutely triangulated. (And obtuse angles could be easily acutely triangulated.)
Oh, OK, will cite Bishop.}
\bluenew{
but it is known that any polyhedron can be acutely triangulated~\cite{bishop2016nonobtuse}.}
We will need this lemma.
\begin{lemma}
Let a triangle in $\mathbb{R}^3$, whose face normal makes angle $\phi$
with the \bluenew{$z$-axis}, have one angle $\a$, which projects to $\a_\bot$ on the $xy$-plane.
Then the maximum value of $\D = | \a - \a_\bot |$
is a monotonically increasing function, as plotted in Fig.~\figref{AngleProjectionGraph}.
\lemlab{AngleProjection}
\end{lemma}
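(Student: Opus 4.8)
The plan is to set up coordinates so the projection becomes a linear map and then reduce the maximization of $\D = |\a - \a_\bot|$ over all triangles with a given face-normal tilt $\phi$ to a one-variable optimization. Place the $xy$-plane as the projection plane and let the triangle's supporting plane make dihedral angle $\phi$ with it. The projection $\pi$ onto the $xy$-plane acts on the triangle's plane as an affine map that is the identity along the ``horizon'' line (the intersection of the two planes) and contracts distances in the perpendicular direction by a factor $\cos\phi$. So, choosing coordinates $(p,q)$ in the triangle's plane with the $p$-axis along the horizon, we have $\pi(p,q) = (p, q\cos\phi)$ up to an isometry of the image. The angle $\a$ at a vertex is the angle between two unit vectors $\mathbf{u}_1, \mathbf{u}_2$ emanating from that vertex; after projection those become $\pi(\mathbf u_1), \pi(\mathbf u_2)$ (no longer unit), and $\a_\bot$ is the angle between them. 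Thus $\a_\bot$ depends only on $\phi$ and on the two \emph{directions} $\psi_1, \psi_2$ (measured from the horizon direction) of the incident edges, not on edge lengths or on where the triangle sits.

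First I would write $\a = \psi_2 - \psi_1$ (taking $\psi_1 < \psi_2$ without loss of generality) and compute $\a_\bot = \arctan\!\big(\tan\psi_2\cos\phi\big) - \arctan\!\big(\tan\psi_1\cos\phi\big)$, handling the $\psi = \pm 90^\circ$ cases by continuity. Then $\D(\psi_1,\psi_2,\phi) = |\a - \a_\bot|$, and the quantity of interest is $\D_{\max}(\phi) = \max_{\psi_1,\psi_2} \D(\psi_1,\psi_2,\phi)$ over $-90^\circ \le \psi_1 < \psi_2 \le 90^\circ$ (with the additional genuine constraint $\psi_2 - \psi_1 < 180^\circ$, i.e. the angle is a real triangle angle — but any such pair is realizable by some triangle, so the constraint set is exactly this square minus its diagonal). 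Next I would take partial derivatives of $\a_\bot$ with respect to $\psi_1$ and $\psi_2$: by symmetry of the two terms it suffices to analyze $g(\psi) = \psi - \arctan(\tan\psi\cos\phi)$, whose derivative is $1 - \frac{\cos\phi\sec^2\psi}{1+\cos^2\phi\tan^2\psi}$. A short computation shows $g'(\psi)$ has a fixed sign pattern — $g$ increases then decreases (or vice versa) on $(-90^\circ,90^\circ)$ — so the extremum of $\a - \a_\bot = g(\psi_2) - g(\psi_1)$ is attained at a symmetric configuration $\psi_2 = -\psi_1 = \psi^*$, with $\psi^*$ determined by $g'(\psi^*) = 0$, i.e. $\cos^2\psi^* = \cos\phi$ (after simplifying $\cos\phi\sec^2\psi = 1 + \cos^2\phi\tan^2\psi$). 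Substituting back gives a closed form $\D_{\max}(\phi) = 2\psi^* - 2\arctan(\tan\psi^*\cos\phi)$ with $\psi^* = \arccos\sqrt{\cos\phi}$, from which monotonicity in $\phi$ (increasing, with $\D_{\max}(0) = 0$) follows by differentiating once more, or simply by noting that the integrand defining $g$ is monotone in $\phi$. That monotone curve is exactly what Fig.~\figref{AngleProjectionGraph} plots.

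The main obstacle is the optimization step: arguing rigorously that the maximum over the two-parameter family occurs at the symmetric critical configuration, rather than on the boundary of the parameter square or at the excluded diagonal. One has to check the boundary cases $\psi_1 \to -90^\circ$ or $\psi_2 \to 90^\circ$ (a projected edge becoming ``vertical'' relative to the horizon) and the degenerate limit $\psi_2 - \psi_1 \to 0$ or $\to 180^\circ$, and confirm these do not beat the interior critical point; this is where the sign analysis of $g'$ does the real work. A secondary subtlety is the reduction itself — verifying that an arbitrary triangle angle $\a$ with incident edge directions $\psi_1, \psi_2$ is realizable (it is: pick any two rays in the plane at those directions and close them off with a third edge), so that the constraint set really is the full square minus the diagonal and no triangle-inequality-type restriction is lost. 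Everything after that is calculus; I would not grind through the derivative simplifications here, only record that they yield the stated closed form and the monotone graph.
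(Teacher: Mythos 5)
Your overall strategy is sound, and it is worth noting up front that the paper itself gives \emph{no} proof of this lemma: it states the result, plots the function, and remarks only that the needed fact is $\Delta \to 0$ as $\phi \to 0$. So your reduction --- projection as the linear map $(p,q)\mapsto(p,q\cos\phi)$ on the face's plane, parametrizing the two edge directions by angles $\psi_1,\psi_2$ from the horizon, and maximizing $g(\psi_2)-g(\psi_1)$ with $g(\psi)=\psi-\arctan(\tan\psi\cos\phi)$ --- supplies exactly the argument the paper omits, and the ``$g$ is pointwise monotone in $\phi$, hence so is its max'' observation you mention at the end is the cleanest way to get both monotonicity and the limit $\Delta\to0$ without ever locating the critical point. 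One small gap in the reduction: a triangle angle can have an incident edge direction outside $[-90^\circ,90^\circ]$; you need to note that replacing a ray by its opposite ray replaces both $\alpha$ and $\alpha_\bot$ by their supplements, leaving $|\alpha-\alpha_\bot|$ unchanged, so the restriction to your square is genuinely without loss of generality.

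There is, however, a concrete algebra error in your closed form. Setting $g'(\psi^*)=0$ gives $\cos\phi\sec^2\psi^* = 1+\cos^2\phi\tan^2\psi^*$; multiplying by $\cos^2\psi^*$ yields $\cos^2\psi^* + \cos^2\phi\,\sin^2\psi^* = \cos\phi$, whose solution is $\sin^2\psi^* = 1/(1+\cos\phi)$, equivalently $\tan^2\psi^* = \sec\phi$ --- not $\cos^2\psi^* = \cos\phi$ as you wrote. (A quick check at $\phi=60^\circ$: your value $\cos^2\psi^*=\tfrac12$ gives $\tfrac12+\tfrac14\cdot\tfrac12=\tfrac58\neq\tfrac12$, while $\cos^2\psi^*=\tfrac13$ gives $\tfrac13+\tfrac14\cdot\tfrac23=\tfrac12$.) With the corrected critical point one gets the tidy closed form $\Delta_{\max}(\phi) = 180^\circ - 4\arctan\sqrt{\cos\phi}$, which is visibly increasing, vanishes at $\phi=0$, and reproduces the paper's numerical claim that $\phi<10^\circ$ forces $\Delta<1^\circ$ (it gives $\Delta_{\max}(10^\circ)\approx 0.88^\circ$). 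The error does not damage the monotonicity conclusion --- that follows from your pointwise-monotonicity remark independently of where the maximum sits --- but the stated value of $\psi^*$ and hence your displayed $\Delta_{\max}$ formula need the fix. The worst-case angle is $\alpha=2\psi^*\approx 90^\circ$ for small $\phi$, not the near-zero angle your formula would predict.
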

We only need that $\D \to 0$ as $\phi \to 0$, so we will not calculate the function explicitly.
For example, for $\phi < 10^\circ$, $\D < 1^\circ$. 
\begin{figure}[htbp]
\centering
\includegraphics[width=0.5\linewidth]{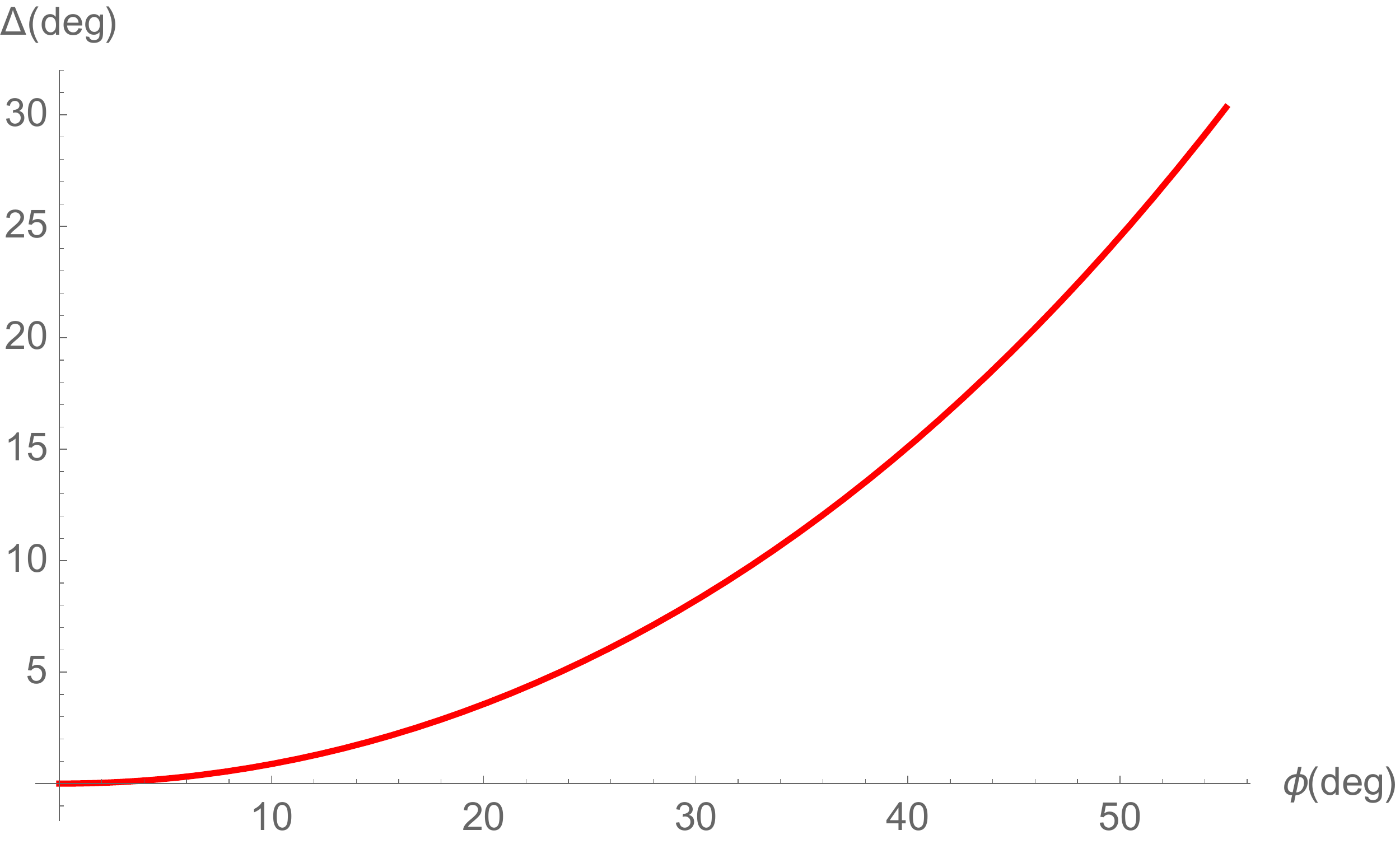}
\caption{The maximum face angle change resulting from projection with normal
at angle $\phi$.}
\figlab{AngleProjectionGraph}
\end{figure}

For a triangulated convex cap $C$, let $\a_\textrm{max}$ be the maximum of any triangle angle.
Using Lemma~\lemref{AngleProjection}, we can guarantee that an acutely triangulated
cap $C$ will project to a non-obtuse plane graph $C_\bot$ 
by choosing $\Phi$ so that $\D < 90^\circ - \a_\textrm{max}$.

Now we apply Algorithm~1 and Lemma~\lemref{SpanningForest} to obtain an angular-monotone
spanning forest $\F_\bot$ of $C_\bot$.
We then lift the trees in $\F_\bot$ to cut trees $\F$ on $C$ in $\mathbb{R}^3$.
Again Lemma~\lemref{AngleProjection} ensures this can be accomplished without
any turn angle in any path in any tree of $\F$ exceeding $90^\circ$.
Now finally we invoke \bluenew{a version of} Theorem~2 as mentioned previously, which guarantees
that the cut paths unfold without local overlap.
We leave it a claim that the angle conditions for that theorem are satisfied 
by selecting $\Phi$ small enough.
The conclusion is that the lifted paths are ``radially monotone,'' which is the
condition that implies unfolding without overlap.
The end result is this:

\begin{conj}
For an acutely triangulated convex cap $C$ with sufficiently small $\Phi$ bounding face normals,
the spanning forest $\F_\bot$ resulting from Algorithm~1
lifts to a cut forest $\F$ that edge-unfolds $C$ without overlap.
\end{conj}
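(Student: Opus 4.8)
\medskip
\noindent\textbf{Proof plan.}
The plan is to turn the six-step sketch above into a rigorous argument, quantifying at each step how small $\Phi$ must be and then taking $\Phi$ below the minimum of the resulting thresholds. Let $\a_{\max}$ denote the largest face angle of the given acute triangulation of $C$. By Lemma~\lemref{AngleProjection} there is a $\Phi_1 > 0$ such that, whenever every face normal makes angle at most $\Phi_1$ with the $z$-axis, each triangle angle changes by less than $90^\circ - \a_{\max}$ under the vertical projection $C \to C_\bot$; hence $C_\bot$ is a non-obtuse triangulation of its vertex set, and because $\Phi < 90^\circ$ the outer boundary of $C_\bot$ is the convex hull of those vertices. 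Lemma~\lemref{SpanningForest} then applies verbatim: Algorithm~1 produces a boundary-rooted spanning forest $\F_\bot$ of $C_\bot$ whose trees are $\b_j$-monotone, $\b_j = 45^\circ + j\cdot 90^\circ$, $j = 0,1,2,3$.

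Next, lift $\F_\bot$ to a cut forest $\F$ on $C$ by taking the unique preimage of each edge under the one-to-one map $C_\bot \to C$. Each turn angle along a root-to-leaf path of $\F_\bot$ is at most $90^\circ$, since that path is $\b_j$-monotone and stays inside a single $90^\circ$ wedge; what must be controlled is the corresponding \emph{surface} turn angle in $\F$, i.e., the angle measured after developing the two triangles incident to the cut path into the plane. A surface analogue of Lemma~\lemref{AngleProjection} yields a $\Phi_2 \le \Phi_1$ for which every such surface turn angle stays below $90^\circ + \e$, for any prescribed $\e > 0$; this is the routine but genuinely necessary strengthening of the planar angle-distortion bound. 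Together with the identification of angle-monotone paths with radially monotone (equivalently, backwards self-approaching) paths, the hypotheses of the relevant version of Theorem~2 of~\cite{o-ucprm-16} are then met, so each root-to-leaf cut path of $\F$ unfolds to a radially monotone curve and the strip of faces flanking it unfolds without local self-overlap.

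The remaining---and by far the hardest---step is to pass from this \emph{local} non-overlap to \emph{global} simplicity of the entire unfolded cap: radial monotonicity prevents a cut path from recrossing itself nearby, but says nothing directly about two distinct trees of $\F$, or two distant face-strips, colliding once laid out in the plane. I would attack this with a continuity argument rather than a combinatorial ordering one. Regard the unfolding of $C$ along the fixed combinatorial cut forest as a map depending continuously on $\Phi$; at $\Phi = 0$ the cap is already planar and its unfolding is the identity on $C_\bot$, which is trivially non-self-overlapping. Injectivity of the unfolding is an open condition, and the uniform surface-turn-angle bounds together with the compactness of $C$ should make it persist for all $\Phi$ below some $\Phi_3 > 0$. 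Taking $\Phi = \min(\Phi_1, \Phi_2, \Phi_3)$ then proves the conjecture. I expect essentially all of the difficulty to lie in establishing this openness/persistence claim: it is precisely what ``sufficiently small $\Phi$'' is buying, and it is the point at which the present write-up only says ``selecting $\Phi$ small enough.''
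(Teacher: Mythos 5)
This statement is stated in the paper as a \emph{conjecture}: the authors explicitly say they only sketch the argument, that ``several steps need considerable elaboration,'' and that other steps rely on an unpublished report, so there is no proof in the paper to compare against. Your first two paragraphs faithfully reproduce that sketch --- choosing $\Phi$ so that Lemma~\lemref{AngleProjection} keeps $C_\bot$ non-obtuse, applying Algorithm~1 and Lemma~\lemref{SpanningForest}, lifting, and invoking the radial-monotonicity theorem for local non-overlap --- and they inherit the same unproved dependencies (in particular, that the angle conditions of Theorem~2 of~\cite{o-ucprm-16} are actually met, which the paper itself only ``leaves as a claim'').

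The genuine gap is in your third paragraph, the one place you go beyond the paper, and the continuity argument as stated does not work. First, $\Phi$ is a bound on the face normals of the \emph{given} cap $C$; it is not a deformation parameter, so ``the unfolding as a map depending continuously on $\Phi$'' is undefined --- you would have to construct a one-parameter family of caps interpolating between $C$ and a flat cap (say by scaling $z$-coordinates), and then also account for the fact that the combinatorial output of Algorithm~1 can change discontinuously along such a family, since which forest is produced depends on the projected geometry. Second, and more fatally, the base point of your continuity argument is degenerate: at $\Phi = 0$ the cuts do not open at all, so the two sides of every cut edge unfold to coincident segments and the unfolding is \emph{not} injective --- it sits on the boundary between overlapping and non-overlapping. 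Openness of injectivity therefore gives you nothing; you must show that for small $\Phi > 0$ the cuts open in the correct direction and the flanking face-strips rotate away from one another, which is precisely the global non-overlap question that radial monotonicity answers only locally and that keeps this statement a conjecture. So this is a proof plan whose one novel step fails at its starting configuration, not a proof.
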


\medskip
\noindent
We have implemented this construction.
Fig.~\figref{AM-20-30-s1-v100-3D} shows a convex cap
with $\Phi \approx 27^\circ$, and Fig.~\figref{AM-20-30-s1-v100-Lay} shows
the corresponding unfolding.\footnote{
The forest in Fig.~\figref{AM-20-30-s1-v100-3D}
is slightly different than that shown in Fig.~\figref{SpanningForest},
due to different ordering choices of $v \in Q_j$.
}
\begin{figure}[htbp]
\centering
\includegraphics[width=0.80\linewidth]{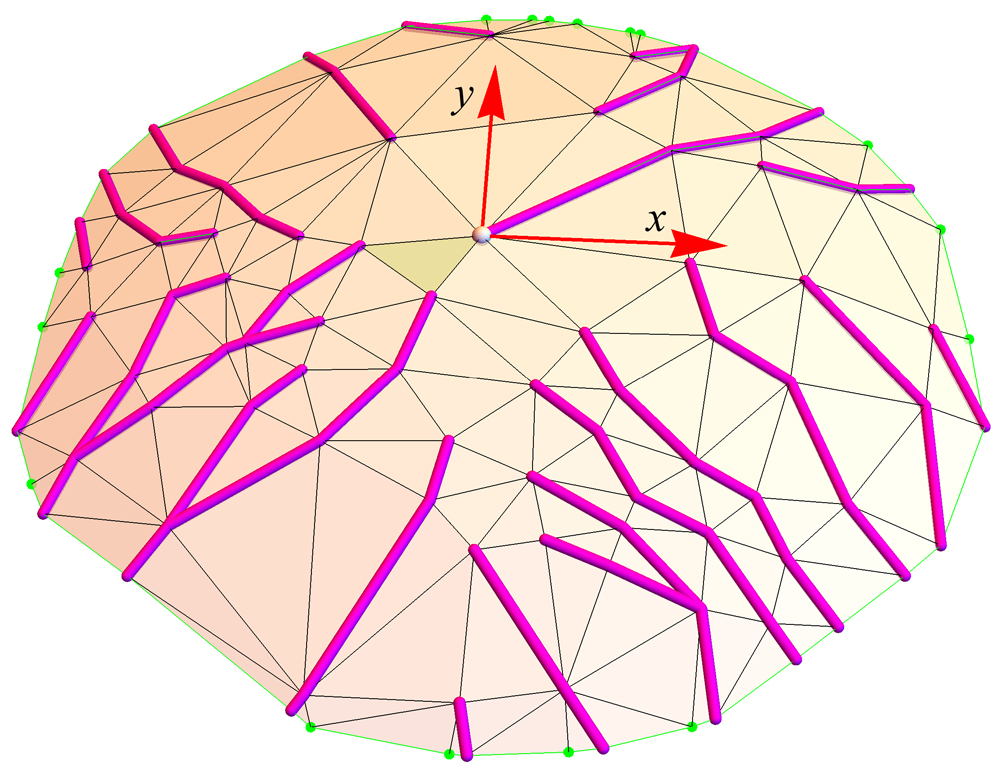}
\caption{The cut forest $\F$ resulting from lifting $\F_\bot$ to the convex cap.
(The marked face is the root of the dual unfolding tree.)}
\figlab{AM-20-30-s1-v100-3D}
\end{figure}
\begin{figure}[htbp]
\centering
\includegraphics[width=0.80\linewidth]{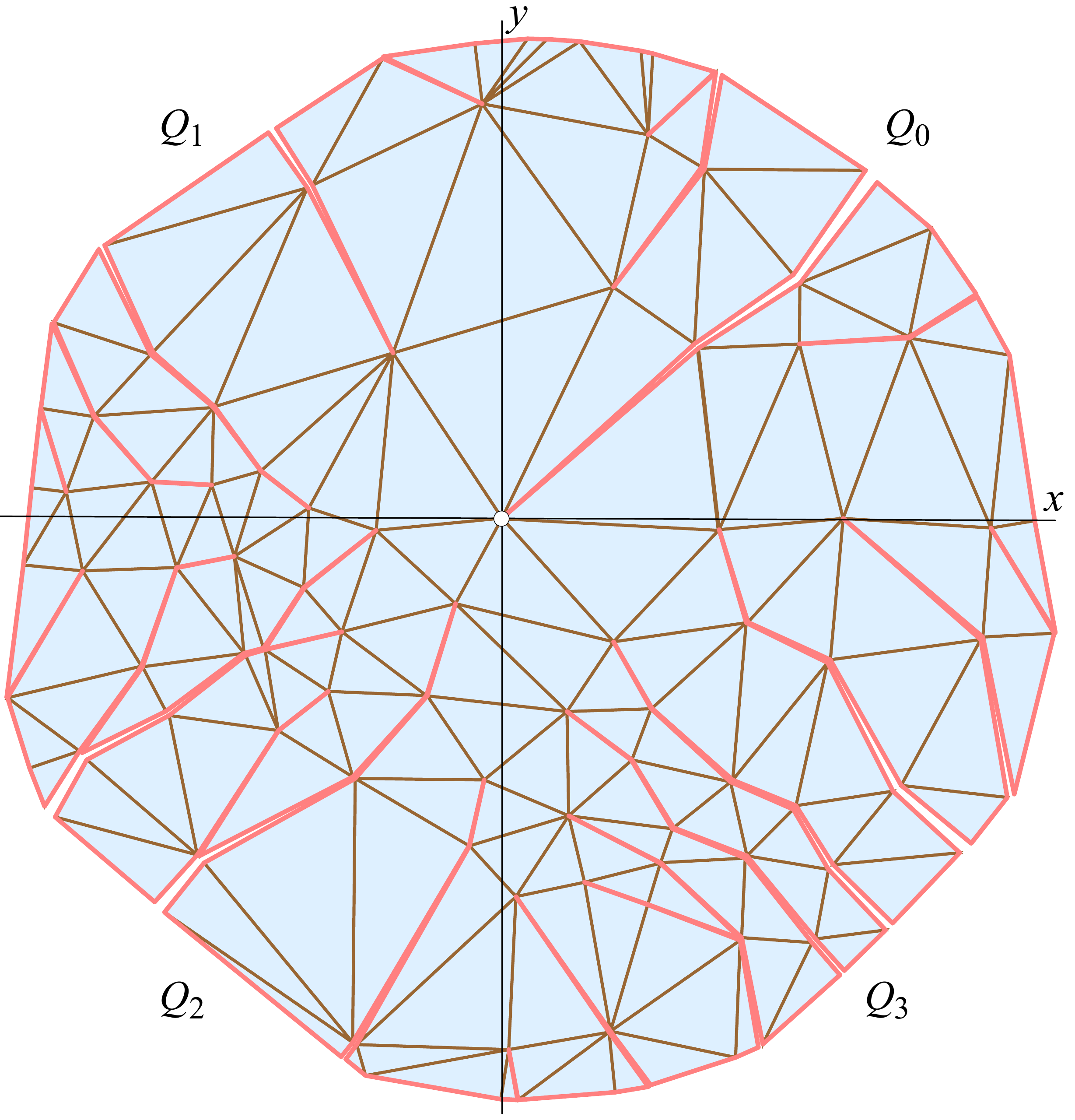}
\caption{The edge-unfolded convex cap. The origin and quadrants
used in Algorithm~1 are indicated.} 
\figlab{AM-20-30-s1-v100-Lay}
\end{figure}
\rednote{I may replace this figure eventually, to remove the root-face highlighting,
which is only distracting here.}

\paragraph{Acknowledgements.}
We thank Debajyoti Mondal for observing that our proof of Theorem~\thmref{PairPath} works for widths other than $90^\circ$.
\rednote{Rev1:
``Add more information to [BBC+16] and cite the CIAC 2017 version of [MS16]."
JOR: Done. Please check your paper.
That Rev1 knew of CIAC for [MS16] is a hint that perhaps one of M or S $=$ Rev1.}

\rednote{JOR: I reduced authors to initials to save space.}


\bibliographystyle{alpha}
\bibliography{geom}
\end{document}